\theoremstyle{plain}
\newtheorem{thm}{\protect\theoremname}[section]
\theoremstyle{plain}
\newtheorem{prop}[thm]{\protect\propositionname}
\newenvironment{proof}[1][\protect\proofname]{\par
	\normalfont\topsep6\p@\@plus6\p@\relax
	\trivlist
	\itemindent\parindent
	\item[\hskip\labelsep\scshape #1]\ignorespaces
}{%
	\endtrivlist\@endpefalse
}
\providecommand{\proofname}{Proof}
\theoremstyle{remark}
\theoremstyle{definition}
\newtheorem{defn}[thm]{\protect\definitionname}
\theoremstyle{plain}
\newtheorem{lem}[thm]{\protect\lemmaname}
\newtheorem{hp}{Assumption}
\providecommand{\definitionname}{Definition}
\providecommand{\lemmaname}{Lemma}
\providecommand{\propositionname}{Proposition}
\providecommand{\remarkname}{Remark}
\providecommand{\theoremname}{Theorem}
\newcommand{\cvd}{\mbox{$\stackrel{d}{\longrightarrow}\,$}}
\newcommand{\cvp}{\mbox{$\stackrel{p}{\longrightarrow}\,$}}
\newcommand{\E}{\operatorname{E}}
\newcommand{\1}{\mathbf{1}}
\def\Real{\mathbb{R}}
\newcommand{\as}{\mbox{ a.s.}}
\let\what\widehat
\renewcommand{\baselinestretch}{1.2}
\begin{document}
\title{Encompassing Tests for Nonparametric Regressions\thanks{
P. Lavergne acknowledges funding from ANR under grant
	ANR-17-EURE-0010 (Investissements d'Avenir program).
}}
\author{Elia Lapenta\thanks{\emph{CREST and ENSAE. Email:
	elia.lapenta@ensae.fr} Address correspondence: CREST, 5 Avenue
	Le Chatelier, 91120 Palaiseau, FRANCE.} \hspace{.1cm}  and
\ Pascal Lavergne\thanks{\emph{Toulouse School of Economics.
Email: lavergnetse@gmail.com} Address correspondence: Toulouse School
of Economics, 1 Esplanade de l'Universit\'e, 31080 Toulouse Cedex 06,
FRANCE.}}
\date{July 18, 2023}

\maketitle
\thispagestyle{empty}
\vspace*{-.5cm}
\normalsize

\begin{abstract}
We set up a formal framework to characterize encompassing of
nonparametric models through the $L^2$ distance.  We contrast it to
previous literature on the comparison of nonparametric regression
models. We then develop testing procedures for the encompassing
hypothesis that are fully nonparametric.  Our test statistics depend
on kernel regression, raising the issue of bandwidth's choice.  We
investigate two alternative approaches to obtain a “small bias
property” for our test statistics. We show the validity of a wild
bootstrap method.  We empirically study the use of a data-driven
bandwidth and illustrate the attractive features of our tests for
small and moderate samples.
\end{abstract}
\textbf{Keywords}: Encompassing,   Nonparametric Regression,
Bootstrap, Bias Correction,

Locally Robust Statistic.
\vspace{0.25cm}\\
\textbf{JEL Classification}: C01, C12, C14

\newpage
\setcounter{page}{2}

\section{Introduction}

The encompassing principle was introduced in econometrics by
\cite{hendry_formulation_1982},
\citet{gourieroux_testing_1983}, and \cite{mizon_encompassing_1986},
and further developed in \citet{gourieroux_testing_1995},
 \citet{florens_encompassing_1996}, and
 \cite{dhaene_instrumental_1998} among others.  It provides a natural
 principle for choosing between two competing theories:  a new theory
 must be able to accommodate the results obtained by a concurrent
 older one. An extensive survey is provided in
 \citet{bontemps_encompassing:_2008}.

Our goal is to propose encompassing tests for nonparametric models.
Our main steps are (i) to formally define encompassing
 for nonparametric models, (ii) to develop fully nonparametric
 encompassing tests, and (iii) to show asymptotic validity of a wild
 bootstrap method for asymptotic inference.  Our first contribution is
 thus to formally set up a framework to precisely define encompassing
 for nonparametric regression models. We discuss nonparametric
 encompassing with respect to previous literature on the comparison of
 such models, whether nested or non nested, see below for references.
 We show that encompassing reduces neither to significance of some
 variables nor to the comparison of the models' theoretical
 fit. Hence, the null hypothesis of encompassing cannot generally be
 tested with existing procedures.  Our second contribution is to
 propose fully nonparametric encompassing tests for regression models.
 Existing encompassing tests rely on parametric functional forms,
 except for
\cite{bontemps_parametric_2008} who propose a test aimed at assessing
a consequence of encompassing.  The new tests we develop directly test
the encompassing hypothesis.  They are based on an empirical process
estimating a continuum of unconditional moments following the
Integrated Conditional Moment (ICM) principle introduced by
\citet{bierens_consistent_1982}.  Our third contribution is to develop
a wild bootstrap method and to show that it provides asymptotically
correct inference.

Our fourth contribution is to propose and investigate two approaches
to obtain a “small bias property.” Our test statistic depends on an
empirical process involving a first-step nonparametric estimator.  The
small bias property of a semiparametric estimator is that its bias
converges to zero faster than the pointwise and integrated bias of the
nonparametric estimator on which it is based. A distinguishing feature
is that the resulting statistic is $\sqrt{n}$-consistent even when the
nonparametric estimator on which it is based converges at the optimal
nonparametric rate. Without this property, using a first-step
nonparametric estimator necessitates some undersmoothing, and this
complicates practical implementation.
\cite{newey_twicing_2004} have developed a generic technique based on
twicing kernels to obtain semiparametric estimators with small bias.
 We develop here two alternative methods that yield a small bias
 property. The first one uses a
 bias-corrected kernel estimator based on the boosting principle
 \citep{di_marzio_boosting_2008, park_l2_2009}. The second approach is
 to make the empirical process of interest {\em locally robust} with
 respect to the nonparametric regression.  This has previously been
 used successfully in semiparametric estimation, see
 \cite{newey_semiparametric_1990} for an early example, and
 \cite{Chernozhukov2022} for a general approach.  We here
 adapt the two approaches to our empirical process of interest and we show that
 these  yield a small bias property in the asymptotic
 expansion of our test statistic.  This allows for a larger set of
 smoothing parameters, so the test is expected to be less sensitive to
 the bandwidth choice.

Our work is related to the extensive literature on consistent
specification testing based on empirical processes, see
\cite{bierens_asymptotic_1997}, \citet{stinchcombe_consistent_1998},
\citet{xia_goodness--fit_2004}, \cite{escanciano_consistent_2006},
\cite{delgado_distribution-free_2008},
\cite{lavergne_breaking_2008} to mention just a few.
The main features of our tests compared to previous work are that (i)
our empirical process contains a nonparametric kernel estimator, and
(ii) due to the form of the null hypothesis we cannot use a
density-weighted process, and we thus need to control for a random
denominator.  Our nonparametric encompassing tests are also connected
to the comparison of nonparametric regressions in nested and non
nested cases,  e.g., \citet{fan_consistent_1996}, \citet{LV96},
\citet{delgado_significance_2001}, and \citet{lavergne_significance_2015}.
We show however that the encompassing hypothesis cannot be tested
through existing procedures. Our work is also related to the literature on
estimation and testing with nonparametric nuisance components, e.g.,
\cite{escanciano_uniform_2014} and \cite{mammen_semiparametric_2016}.
The former authors obtain uniform-in-bandwidth expansions for an
empirical process similar to the one we consider.  We focus here on
obtaining a small bias property, but we do not formally establish
uniformity in bandwidth.

Our paper is organized as follows. Section
\ref{sec:Principle} formalizes the encompassing
notion for nonparametric models and compares our framework to the
literature on comparison of nonparametric regressions.  Section
\ref{sec:Tests} details the
construction of the test statistics and the two approaches used to
obtain a small bias property.  Section
\ref{sec:Theory} is devoted to the analysis of
the asymptotic behavior of our statistics. Since the asymptotic
distribution under the null depends on unknown features of the DGP, we
establish in Section \ref{sec:Bootstrap} the validity of a wild
bootstrap procedure. Section \ref{sec:Simulations} provides evidence
about the small sample performances of our procedures. We check that
bootstrapping allows to correctly control size and that our tests have
good power.  We evaluate the benefits of our bias-reducing approaches, and
we investigate thoroughly the influence of the bandwidth as well as of
the trimming parameter, which is theoretically necessary. We also
provide an empirical illustration. Section \ref{sec:Proofs} contains
the proofs of our main results.  A supplementary material contains the
proof of a technical lemma.

\section{Encompassing for Nonparametric Regressions}
\label{sec:Principle}

The definition of encompassing
starts with the definition of the {\em binding function}, see
\citet{gourieroux_testing_1995}.  In a parametric context, we
typically start with two competing parameterized families of densities
 for $Y$,
${\mathcal{M}_{1}}=\{g_{1}(\cdot,\alpha_{1})\, : \,  \alpha_{1}\in
 A_{1}\}$ and ${\mathcal{M}_{2}}:=\{g_{2}(\cdot,\alpha_{2})\, : \,
\alpha_{2}\in A_{2}\}$.
The pseudo true value of $\alpha_{i}$, $i=1$ or $2$, is defined as
\[
\alpha_{i}^{*}=
\arg\min_{\alpha_{i}\in A_{i}}
d (f(\cdot) , g_{i}(\cdot,\alpha_{i}))
\, ,
\]
where $f(\cdot)$ is the true  density of $Y$ and $d$ is
some divergence between the two distributions,  for
instance the Kullback-Leibler divergence
$
\E_f \left[ \log \left( f(\cdot) / g(\cdot,\alpha)\right) \right]
$,
with expectation taken with respect to $f(\cdot)$.
The binding function $b(\alpha_{1})$ is a correspondence between an element of model
${\mathcal{M}_{1}}$ and the  element of model
${\mathcal{M}}_{2}$ that is closest to it.  Specifically,
\[
b(\alpha_{1}) =
\arg\min_{\alpha_{2}\in A_{2}}
d (g_1(\cdot,\alpha_{1}), g_{2}(\cdot,\alpha_{2}))
\, .
\]
We then say that  ${\mathcal{M}_{1}}$ encompasses model
${\mathcal{M}_{2}}$ if
$
\alpha_{2}^{*}=b(\alpha_{1}^{*})
$.
That is, ${\mathcal{M}_{1}}$ encompasses
${\mathcal{M}_{2}}$ if the pseudo-true value of the latter can
be obtained from the pseudo-true value of the former.

Here we focus on two nonparametric competing models to explain $Y$,
where Model $\mathcal{M}_{W}$ uses covariates $W$ and Model
$\mathcal{M}_{X}$ uses $X$. A nonparametric model with a specific set
of covariates is a function of these variables.  To define the
pseudo-true value that corresponds to the best explanation of $Y$, we
consider the ${L}^{2}$ distance. That is, $\mathcal{M}_{W}$
and $\mathcal{M}_{X}$ are respectively defined as
${L}^{2}(W)$, the space of the square integrable functions of
$W$, and ${L}^{2}(X)$.  The “pseudo-true functions” are then
the regression functions
\begin{align}
m(W) & = \arg\min_{g \in {L}^{2}(W)} \E \left( Y-g(W)
\right)^{2} = \E (Y|W),
\nonumber
\\
\mbox{and }
m_X(X) & = \arg\min_{h \in {L}^{2}(X)} \E \left( Y-h(X) \right)^{2}
= \E (Y|X)
\, .
\label{eq: Pseudo true value in Encompassing of this paper}
\end{align}
The binding function is similarly defined  as
\[
b( g(W) )= \arg\min_{h \in {L}^{2}(X)}
\E \left( g(W) - h(X) \right)^{2}
 =
 \E \left( g(W) | X\right)
\, .
\]
We  thus say that  ${\mathcal{M}_{W}}$ encompasses model ${\mathcal{M}_{X}}$
if
\begin{equation}
 b(m (W)) = m_X(X) \as \Leftrightarrow
\E \left( \E (Y|W) | X\right) =  \E (Y|X)
\as
\label{eq:EncReg}
\end{equation}
That is, the regression function of $Y$ on $X$ can be obtained from the
regression function of $Y$ on $W$.
Similarly, ${\mathcal{M}_{X}}$ encompasses model
${\mathcal{M}_{W}}$ if $\E \left( \E (Y|X) | W\right) =  \E
(Y|W)$ almost surely.

In what follows, we explore the implications of the definition of
nonparametric encompassing and relate it to previous work on the
comparison of regression models.  In particular, we show that
encompassing reduces neither to significance of some variables nor to
the comparison of the models' theoretical fit. Hence, the null
hypothesis of encompassing cannot generally be tested with existing
procedures.

\subsection{Encompassing and Model Fit}

\cite{LV96} proposed comparing nonparametric regression models on the
basis of their theoretical fit and considered the
hypotheses
\begin{align*}
H_0\, :  \, & \E\left[Y - m(W)\right]^2-  \E\left[Y -m_X(X)\right]^2 =
0
\, ,
\\
H_W\, : \, & \E\left[Y - m(W)\right]^2-  \E\left[Y -m_X(X)\right]^2 <
0
\, ,
\\
H_X\, :  \,  & \E\left[Y - m(W)\right]^2-  \E\left[Y -m_X(X)\right]^2 >
0
\, .
\end{align*}
Non rejection of the null hypothesis $H_0$ means that both models have
the same theoretical fit. Rejection of $H_0$ in favor of either $H_W$
or $H_X$ indicates which model dominates the other.
This framework is quite general since it does not make a distinction
between nested and non nested situations and treats the two competing
models symmetrically. \cite{LV96} built a test of $H_{0}$ against
$H_W$ and $H_X$ based on the comparison of the empirical analogs of the models' fit.

 Comparing nonparametric regressions through their fit seems
 natural. Does encompassing imply a better fit for the
 encompassing model? As we detail below, the answer is yes: if
 $\mathcal{M}_{W}$ encompasses $\mathcal{M}_{X}$, the theoretical fit
 of $\E(Y|W)$ is at least as good as the one of $\E(Y|X)$, and is
 strictly better except when $\E(Y|W) = \E(Y|X) \as$
\begin{prop}
\label{prop: Better explanation by encompassing}
If  $\mathcal{M}_{W}$ encompasses  $\mathcal{M}_{X}$,
\begin{enumerate}[label=(\alph*)]
\item $H_X$ cannot hold,
\item  $H_0$ holds iff  $\E(Y|W) = \E(Y|X) \as$ iff
$\mathcal{M}_{X}$ encompasses $\mathcal{M}_{W}$.
\end{enumerate}
\end{prop}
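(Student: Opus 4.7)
The plan is to derive a Pythagorean-style identity for the two fits under the encompassing hypothesis, and then read both claims off that identity. Recall that encompassing of $\mathcal{M}_X$ by $\mathcal{M}_W$ means $\E\left( m(W)\mid X\right) = m_X(X)$ almost surely, where $m(W)=\E(Y\mid W)$ and $m_X(X)=\E(Y\mid X)$.

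First, I would write
\[
\E\left[Y - m_X(X)\right]^2 = \E\left[Y - m(W)\right]^2 + 2\, \E\left[(Y - m(W))(m(W) - m_X(X))\right] + \E\left[m(W) - m_X(X)\right]^2,
\]
and show that the cross term vanishes. Split it as $\E[(Y - m(W))\, m(W)] - \E[(Y - m(W))\, m_X(X)]$. The first piece is zero by the defining $L^2$-orthogonality of $m(W)=\E(Y\mid W)$ with any function of $W$. For the second piece, conditioning on $X$ gives $\E[\E(Y - m(W)\mid X)\, m_X(X)]$, and $\E(Y - m(W)\mid X) = m_X(X) - \E(m(W)\mid X) = 0$ almost surely by the encompassing assumption. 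The key identity therefore reads
\[
\E\left[Y - m_X(X)\right]^2 - \E\left[Y - m(W)\right]^2 = \E\left[m(W) - m_X(X)\right]^2 \geq 0.
\]

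Part (a) is then immediate: the left-hand side is non-negative, so $H_X$ is impossible. For part (b), the same identity shows that $H_0$ holds if and only if $\E[m(W) - m_X(X)]^2 = 0$, i.e., $\E(Y\mid W) = \E(Y\mid X)$ almost surely. If this equality holds, then trivially $\E(m_X(X)\mid W) = \E(m(W)\mid W) = m(W)$, which is exactly the statement that $\mathcal{M}_X$ encompasses $\mathcal{M}_W$. Conversely, if $\mathcal{M}_X$ encompasses $\mathcal{M}_W$, I would rerun the first step with the roles of $W$ and $X$ swapped to get $\E[Y - m(W)]^2 - \E[Y - m_X(X)]^2 = \E[m_X(X) - m(W)]^2$; combined with the original identity, both sides must equal zero, giving $\E(Y\mid W) = \E(Y\mid X)$ almost surely.

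The only subtle step is the vanishing of the cross term: it relies on combining the standard projection orthogonality of $m(W)$ with the encompassing condition interpreted as a second orthogonality. Once this is in hand, everything else is bookkeeping.
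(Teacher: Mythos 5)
Your proof is correct, and for part (a) it is the same argument as the paper's, with the cross-term cancellation spelled out in detail. For part (b), however, you take a genuinely different and more self-contained route: the paper invokes an external result (Lemma 1 of \citet{LV96}) to establish that encompassing together with $H_0$ forces $\E(Y\mid W) = \E(Y\mid X)$ a.s., whereas you read this directly off the Pythagorean identity $\E[Y - m_X(X)]^2 - \E[Y - m(W)]^2 = \E[m(W) - m_X(X)]^2$, which makes the equivalence with $H_0$ immediate. For the direction where $\mathcal{M}_X$ also encompasses $\mathcal{M}_W$, your device of applying the identity in both directions and adding them — forcing $\E[m(W)-m_X(X)]^2 = 0$ — is a clean replacement for the cited lemma. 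The paper's approach is shorter on the page but pushes the real work into a reference; your version is slightly longer but fully self-contained and arguably better pedagogically, since it exposes that all three equivalences in (b) flow from the one orthogonality computation already done in (a).
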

\begin{proof}
Consider  Statement (a).
Since $\E (  \E(Y|W) | X) =  \E(Y|X) \as$,
\begin{align*}
E\left[Y - \E(Y|X)\right]^2 &  =
\E\left[Y - \E(Y|W) \right]^2 +
\E\left[\E(Y|W)-  \E(Y|X) \right]^2
\, ,
\end{align*}
as the cross-term cancels, and
$E\left[Y - \E(Y|X)\right]^2  \geq
\E\left[Y - \E(Y|W) \right]^2$.

Consider  Statement (b). It is obvious that if $\E(Y|W) =
\E(Y|X) \as$ then $H_0$ holds and both models encompass each other.
By \citet[Lemma 1]{LV96}, if one model encompasses the other and
$H_0$ holds, it should be that $\E(Y|W) = \E(Y|X)$.
Finally, if  both models encompass each other, then $H_0$ holds by
 Statement (a).
\end{proof}

Therefore, except in the case where the two regressions are equal,
encompassing implies a strictly better fit for the encompassing model.
Conversely, one model can have a better fit than another without
encompassing it, see \cite{LV96} for some examples where this
happens. Hence, the encompassing concept is not tailored for model
selection.
 
The test statistic proposed by \cite{LV96} to compare models' fit is
asymptotically degenerate under $H_0$ when the two models are
“generalized nested regressions,” that is when either $\E ( \E(Y|W) |
X) = \E(Y|X) \as$ or $\E ( \E(Y|X) | W) = \E(Y|W) \as$ Hence their
test cannot be used when there is encompassing, and the tests we
develop below are complementary to theirs. The latter might also be
used as preliminary tests to check whether their test can be
entertained. This would imply (i) testing whether Model ${\cal M}_X$
encompasses Model ${\cal M}_W$, and (ii) testing whether Model ${\cal
M}_W$ encompasses Model ${\cal M}_X$. One rejection among two would
imply (we cannot reject) that the encompassing model has a strictly
better fit than the encompassed one; no rejection would mean (we
cannot reject) that $\E(Y|W) = \E(Y|X) \as$ Only if we rejected twice
would we need to entertain the test of \cite{LV96} to determine
whether one model has a better fit.  A more direct alternative to this
involved procedure is to apply the test proposed
by \cite{liao2020nondegenerate}, which is universally valid.  We leave
the study of the respective merits of these competing procedures for
future work.

\subsection{Encompassing  and Significance}
\label{sec:EncSignif}
When the two sets of regressors are nested, specifically when $W= (X,
Z)$, then it is clear that (\ref{eq:EncReg}) holds. A more interesting
question is whether ${\mathcal{M}_{X}}$ encompasses
${\mathcal{M}_{W}}$, that is whether
\[
\E \left( \E (Y | X) | X, Z \right)  = \E(Y|X) =\E (Y| X, Z) \as
\]
In this setup, encompassing is equivalent to whether $Z$ is significant in the
regression function of $Y$ on $X$ and $Z$. Significance testing in
nonparametric regressions  has been a focus
of extensive work, see \cite{fan_consistent_1996,
lavergne2000nonparametric, Abs01, Lav2001, delgado_significance_2001}. 

Consider now two nonnested sets of regressors,
and assume the regressors $X$ are not significant once we control for $W$,
that is
\begin{equation}
\E (Y|W,X)=\E (Y|W) \as
\label{hyp:sign}
\end{equation}
By conditioning again with respect to $X$, one obtains that
\[
\E (Y|X)   = \E [ \E (Y|W) |X] \as
\]
so that $\mathcal{M}_{W}$ encompasses $\mathcal{M}_{X}$.
\cite{bontemps_parametric_2008} consider (\ref{hyp:sign}) as their
hypothesis of interest, since it implies encompassing.  The other
direction of the implication, however, does not hold: if
$\mathcal{M}_{W}$ encompasses $\mathcal{M}_{X}$, then it is not
necessarily true that the covariates $X$ are not significant in the
nonparametric regression of $Y$ onto $(W,X)$, as shown below.
\begin{prop}
 $\mathcal{M}_{W}$ encompasses $\mathcal{M}_{X}$ if and only if
 \[
\E \left( Y | W, X \right) = \E (Y|W) + g(W,X)
\qquad \mbox{with } \, \E\left[ g(W,X) | W \right] = \E\left[ g(W,X) | X \right] = 0
\as
\]
\end{prop}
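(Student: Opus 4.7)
The plan is to prove the equivalence by constructing an explicit candidate for $g$ in one direction and verifying the encompassing identity in the other, relying throughout only on the tower property of conditional expectation.

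For the forward direction, assume $\mathcal{M}_W$ encompasses $\mathcal{M}_X$, i.e.\ $\E[\E(Y|W)|X] = \E(Y|X)$ a.s. The natural candidate is
\[
g(W,X) := \E(Y|W,X) - \E(Y|W),
\]
so that the decomposition $\E(Y|W,X) = \E(Y|W) + g(W,X)$ is automatic. Verifying $\E[g(W,X)|W] = 0$ is immediate from the tower property: $\E[\E(Y|W,X)|W] = \E(Y|W)$. The condition $\E[g(W,X)|X] = 0$ is the substantive one: it reads $\E[\E(Y|W,X)|X] - \E[\E(Y|W)|X] = 0$, i.e.\ $\E(Y|X) = \E[\E(Y|W)|X]$, which is precisely the encompassing hypothesis.

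For the converse direction, assume the decomposition holds with both conditional mean conditions. Take conditional expectation given $X$ on both sides of $\E(Y|W,X) = \E(Y|W) + g(W,X)$: the left-hand side reduces to $\E(Y|X)$ by the tower property, while the right-hand side reduces to $\E[\E(Y|W)|X]$ since $\E[g(W,X)|X] = 0$. This is the encompassing identity \eqref{eq:EncReg}.

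There is no real obstacle here; the only subtle point is to notice that once one writes $g$ as the residual $\E(Y|W,X) - \E(Y|W)$, one of the two mean-zero conditions is free (tower property) while the other is exactly equivalent to encompassing. The mean-zero condition given $W$ plays no role in the equivalence beyond guaranteeing that the decomposition is well-posed (it is automatically satisfied by the natural choice of $g$), so the statement may alternatively be read as saying that encompassing is characterized by the vanishing of $\E[g(W,X)|X]$ for the canonical residual $g$.
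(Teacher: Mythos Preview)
Your proof is correct and follows essentially the same approach as the paper: both define $g(W,X)=\E(Y|W,X)-\E(Y|W)$ (the paper writes it as $\E(\varepsilon|W,X)$ with $\varepsilon=Y-\E(Y|W)$), obtain $\E[g|W]=0$ from the tower property, and identify $\E[g|X]=0$ with the encompassing condition, while the converse is handled by conditioning the decomposition on $X$.
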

\begin{proof}
By definition, $Y = \E (Y|W) +\varepsilon$ with $\E (\varepsilon |W) =
0$. Hence,
\[
\E (Y|W,X)  = \E (Y|W) + \E(\varepsilon|W,X) = \E (Y|W) + g(W,X)
\, .
\]
Conditioning on $W$ yields $\E \left[ g(W,X) | W \right]
 = 0$.  Conditioning on $X$ and using the fact that
 $\mathcal{M}_{W}$ encompasses $\mathcal{M}_{X}$ yields $\E \left[
 g(W,X) | X \right] = 0$. This shows necessity.
Sufficiency similarly follows by conditioning
$\E \left( Y | W, X \right) = \E (Y|W) + g(W,X)$ on $X$
 and using $\E \left[ g(W,X) | X \right] = 0$.
\end{proof}
The previous result highlights that  for non nested sets of regressors,
 the encompassing property does not reduce to the significance of some
 regressors in the complete regression.  As long as $g(W,X)$ is not
a.s. equal to zero, $\mathcal{M}_{W}$ encompasses $\mathcal{M}_{X}$,
but $X$ is a significant covariate in $\E(Y|W,X)$.
The following  provides a concrete example.

\paragraph{Example.}
Let $W$ be continuous univariate, symmetrically distributed around 0,
with density $f_W(\cdot)$. Consider $p(\cdot) \in (0,1)$, $\varphi(\cdot)$
and $\psi(\cdot)$ two univariate densities with mean 0. Define
\[
f_{X|W}(x|w)  = p(w) \varphi(x) + (1-p(w)) \psi(x)
\, ,
\]
so that $X$ has a mixture distribution conditionally on $W$.
Then
\begin{align*}
f_{X}(x) &  =
\left[ \int p(w) f_W(w) \,  dw \right] \varphi(x) +
\left[ \int (1-p(w)) f_W(w) \,  dw \right]
 \psi(x)
 \, ,
 \\
f_{W|X}(w|x) & =
  \frac{f_W(w)}{f_X(x)} \left( p(w) \varphi(x) + (1-p(w)) \psi(x)\right)
\, .
\end{align*}
Let
\[
Y = m(W) + h(W) X + \eta\, , \qquad  \E(\eta|W,X) = 0
\, .
\]
Then
\[
\E(X|W) =
p(W)  \int_{}^{}{ x \varphi(x) \, dx} + (1-p(W)) \int_{}^{}{x \psi(x) \, dx} = 0
\, .
\]
One can select $h(\cdot)$ and $p(\cdot)$ so that $\E [h(W)|X]
= 0$.  Consider first a case where $h(\cdot)$ is an odd function with
mean 0 and $p(\cdot)$ is even. Then by the symmetry of $f_W(\cdot)$,
$\E [h(W)|X] = 0$.  Another setup used in our simulations below is
for $h(\cdot)$ even with mean 0 and $p(-w) = 1-p(w)$. Again, it is
easy to show that  $\E [h(W)|X] = 0$.

Whenever  $\E [h(W)|X] = 0$, then $\E [ h(W) X|W] = \E[ h(W)
X|X]=0$, and $\mathcal{M}_{W}$ encompasses 
$\mathcal{M}_{X}$, but $X$ is significant in $\E(Y|W,X)$.
Furthermore, $\mathcal{M}_{X}$ does not encompass $\mathcal{M}_{W}$ in
general. If this holds, then  $m(W)=\E(Y|X)$ \as \ from
Proposition \ref{prop: Better explanation by encompassing}. Integrating both sides
with respect to the marginal density $f_X(\cdot)$ implies that $m(W)$
and thus  $\E(Y|X)$ are both constant almost surely.

\section{Tests Statistics}
\label{sec:Tests}

\subsection{ICM Statistic}\label{sec:ICM Statistic}
We want to test
\[
{H}_{0} \, : \, \E \left( \E(Y|W) | X\right) = \E(Y|X) \as
\Leftrightarrow \E \left( Y- \E(Y|W) | X\right) = 0 \as
\]
 against its logical complement ${H}_{1} =
{H}_{0}^{c}$.  While $H_0$ is a conditional moment restriction, we can consider instead an equivalent continuum of unconditional moments.
Assume $X \in \Real^d$ has bounded support, which is without loss of
generality, as we can always transform  $X$ by a
one-to-one function that maps it to a compact set.
Then the null hypothesis is equivalent to
\begin{equation}
{H}_{0} \, : \,
\E \left[ \left( Y - \E(Y|W) \right) \varphi(s'X) \right] = 0
\ \as \quad \forall s
\in {\cal S}
\, ,
\label{eq:Moment Condition}
\end{equation}
where  ${\cal S}$ is a (arbitrary) neighborhood  of the origin  in
$\Real^d$ and $\varphi(\cdot)$ is a well-chosen function, see
Assumption \ref{Ass0} below for precise conditions.
Some convenient choices for $\varphi(\cdot)$ are as follows.
\citet{bierens_consistent_1982} shows the previous equivalence for  the
complex exponential $\varphi(u) =\exp (i u) $,
\cite{bierens_consistent_1990} considers the exponential $\varphi(u) =\exp (u) $,
\citet{bierens_asymptotic_1997}  the logistic
c.d.f. $\varphi(u) = 1/(1+\exp(c-u))$, see also
\citet{stinchcombe_consistent_1998}. Other types of functions could be
used such as indicator functions \citep{escanciano_consistent_2006, delgado_significance_2001}.

If we observe a random sample $(Y_i,W_i,X_i)$, $i=1, \ldots n$, from
$(Y,W,X)$, and if we know the precise form of $\E(Y|W)$, then
Bierens' Integrated Conditional Moment (ICM)  statistic for testing $H_0$ is
\begin{equation}
 \int_{\Real^{d}}{\left| n^{-1/2} \sum_{j=1}^{n}{
 \left(Y_j -  {\E} (Y | W_j) \right) \varphi(s'X_j)} \right|^{2}
 \,  d\mu (s)}  \, ,
\label{icm}
\end{equation}
where $\mu$ is some probability measure on ${\cal S}$, such as
the uniform distribution on ${\cal S}$. Alternatively, a
Kolmogorov-Smirnov type statistic could be considered, but the
Cramer-von-Mises form appears to be easier to deal with in practice, see below.

In practice, we use a kernel nonparametric estimator of the
conditional expectation $\E (Y|W)$.  Let $K(\cdot)$ be a kernel on
$\Real^{p}$, $h\,(=h_n)$ a bandwidth, and $K_{{h}} (u) = K \left( u_{1}/h,
\ldots u_{p}/h \right)$.\footnote{For notational simplicity we are
assuming the same bandwidth across regressors, but our proofs would carry
over when each regressor has a specific bandwidth.}  Define
\begin{equation*}
\overline{Y}(w) =  \left(n  h^p \right)^{-1}
\sum_{i=1}^n Y_i K_h \left( {W_i - w} \right)
\, .
\end{equation*}
With $e = (1, \ldots, 1)'$, let $\widehat{f} (w) = \overline{e} (w)$,
and $\widehat{m} (w) = \overline{Y}(w)/\widehat{f} (w)$.

To control for the random denominator in the kernel estimator, we
introduce a trimming factor $\widehat{t}(w) = \1 \left( \widehat{f}(w)
\geq \tau_{n} \right)$, where $\tau_n$ converges to zero.  Let
$\widehat{\varepsilon}_{i} = Y_i - \widehat{m} (W_i) $, $\phi_s(\cdot)
= \varphi(s'\cdot)$, and $\mathbb{P}_{n}(g) = n^{-1}
\sum_{i=1}^{n}g(Z_{i})$ denote the  empirical mean process based on $g(\cdot)$.
An ICM test statistic can be  built  upon the
process $\mathbb{P}_{n}\left( \widehat{\varepsilon} \phi_{s} \widehat{t} \right)$ as
\[
S_n  = n \,  \int_{{\cal S}} | \mathbb{P}_{n} \left(
\widehat{\varepsilon} \phi_{s}   \widehat{t} \right) |^{2}
\, d\mu(s)
\, .
\]
In our practical implementation, we chose $\varphi(\cdot)$ as the complex
exponential and $\mu$ to be symmetric around the origin. Define
\[
a(z) = \int_{{\cal S}} \exp(i s'z) \, d\mu(s) =
\int_{{\cal T}} \cos(s'z) \, d\mu(s)
\, ,
\]
due to the symmetry of $\mu$, and let $\widehat{\bm{  \varepsilon}} = \left(
\widehat{\varepsilon}_{j}, j=1, \ldots n \right)'$ and
$\widehat{t}_{i} = \widehat{t} (W_i)$. Then the statistic becomes
 \begin{align*}
&  \int_{{\cal S}}{ n^{-1} \sum_{j=1}^{n}\sum_{m=1}^{n}
{ \widehat{\varepsilon}_{j}  \widehat{\varepsilon}_{m}
 \widehat{t}_j  \widehat{t}_m
\exp(i s'(X_{j}-X_m))} \, d\mu (s)}
 \\
& =
 n^{-1}  \sum_{j=1}^{n} \sum_{m=1}^{n} { \widehat{\varepsilon}_{j}
 \widehat{\varepsilon}_{m}  \widehat{t}_j  \widehat{t}_m
 \int_{{\cal S}} { \cos( s'(X_{j}-X_m))} \, d\mu (s)}
=
 \widehat{\bm{\varepsilon}}^{ \prime}  \bm{A} \widehat{\bm{\varepsilon}}
\, ,
 \end{align*} where $\bm{A}$ is a matrix with generic element $n^{-1}
a\left(X_{j}-X_{m}\right) \widehat{t}_j \widehat{t}_m$.  In practice
the function $a(\cdot)$ is the Fourier transform of $\mu$, so we can
choose the latter so that the former has an analytic expression, and
computation of the matrix $A$ is fast. To achieve scale invariance, we
recommend, as in \cite{bierens_consistent_1982}, to scale each
component of $X$ by a measure of dispersion, such as the empirical
standard deviation.

The behavior of $\sqrt{n} \mathbb{P}_{n}\left( \widehat{\varepsilon} \phi_{s}
\widehat{t} \right)$ is studied in detail by
\cite{escanciano_uniform_2014}, who derived a uniform expansion.
Hence, the properties of the ICM test based on $S_n$ can be derived
 from their results.  However, these impose undersmoothing in kernel
 estimation, which ensures that the bias disappears fast enough, but
 makes the practical bandwidth choice tricky. In what follows, we
 develop two approaches to avoid undersmoothing, and in particular to
 allow for an optimal nonparametric bandwidth. This is convenient
 because there are well-known methods for approximating such
 bandwidths.

\subsection{Bias Corrected Estimation}

\cite{newey_twicing_2004} obtained a small bias property for density weighted average semiparametric estimators.  Here we instead rely on an idea developed in the boosting literature for kernel regressions \citep{di_marzio_boosting_2008, park_l2_2009}.
\citet{xia_goodness--fit_2004} used a similar bias correction in a
 specification test for a single-index model. 

$L^2$ boosting starts with an initial nonparametric estimator and
builds a bias-corrected updated estimator. The bias correction is
roughly based on nonparametric residuals. The method can be iterated,
but we will restrict to a single boosting step, which is sufficient
for our purpose. In our context, the bias correction to be applied to
$\widehat{m} (w)$ is
\[
\widehat{B} (w) = \frac{\overline{\widehat{m}}(w)}{\widehat{f} (w)} -
\widehat{m}(w)
\, ,
\]
where
\begin{equation}
\label{smoothing of m.hat}
\overline{\widehat{m}}(w)=(nh^p)^{-1}\sum_{i=1}^n\widehat{m}(W_i)
\widehat{t}_{i} K_h(W_i-w)
\, ,
\end{equation}
and the trimming  controls for the random denominator of $\widehat{m}$.
The bias-corrected estimator thus is
\begin{equation}
\widetilde{m}(w) = \widehat{m}(w)-\widehat{B}(w) =
2 \widehat{m}(w) - \frac{\overline{\widehat{m}}(w)}{\widehat{f} (w)}
\, .
\label{eq:Bce}
\end{equation}
We  then consider $\widetilde{\varepsilon}_{i} =
Y_{i} - \widetilde{m} (W_i)$ and the bias-corrected ICM statistic
\[
S_{n}^{BC}= n\,
\int_{{\cal S}} | \mathbb{P}_{n} \left(
\widetilde{\varepsilon} \phi_{s} \widehat{t} \right) |^{2}
\, d\mu(s)
\, .
\]
The form (\ref{eq:Bce}) of our bias corrected estimator is similar to
the one discussed in \citet[Section 3]{newey_twicing_2004} who
considered density-weighted nonparametric estimators.  To use their
technique in our setup would necessitate applying their correction to
both the numerator and the denominator of the regression estimator.
We feel more natural and practically more convenient to correct for
the bias of the regression estimator as described above.

\subsection{Locally Robust Process}

Locally robust semiparametric estimation has been considered by
several authors, see \cite{newey_semiparametric_1990} for an early
example.  \cite{Chernozhukov2022} consider a general GMM
estimation problem, where the moments depend on a first-step
nonparametric estimator.  Locally robust semiparametric GMM estimators
are built from moment conditions that have zero derivatives with
respect to the first-step estimator so that the latter does not affect
the asymptotic variance of the parameters of interest.  They have
smaller bias and better small sample properties than standard GMM
estimators.

In our case, we have a continuum of moment conditions.  Our goal is to
modify these moment conditions so that (i) we can still test for our
null hypothesis of interest using these modified moments, and (ii)
estimation becomes “adaptive” with respect to the nonparametric
regression. Hence, we consider the moments
\[
\E \left[ \left( Y - \E(Y|W) \right) \left( \varphi(s'X) - \E (\varphi(s'X) | W) \right)
  \right] = 0 \ \quad \forall s \in {\cal S}
  \, .
\]
Because for any $s$, $\E \left[ \left( Y - \E(Y|W) \right)  \E
(\varphi(s'X) | W)   \right] = 0$,
the above moment conditions are equivalent to (\ref{eq:Moment Condition}).
We show below that for the empirical equivalent of these moments,
estimation of nonparametric components has no first order effect.

Let $\iota_{s}(W) = \E (\phi_{s}(X)|W)$ and its estimator
\[
\widehat{\iota}_s (w) = \overline{\phi_{s}}(w)/\widehat{f} (w)
\, , \qquad
 \overline{\phi_{s}}(w)  = \left(n  h^p \right)^{-1} \sum_{i=1}^n
 \phi_s(X_i) K_h \left( {W_i - w} \right)
 \, .
\]
Our locally robust approach thus delivers the statistic
\[
S_{n}^{LR}=
n \, \int_{{\cal S}} | \mathbb{P}_{n} \left(
\widehat{\varepsilon} ( \phi_{s} - \widehat{\iota}_s)
\widehat{t} \right) |^{2}
\, d\mu(s)
\, .
\]
\cite{newey_twicing_2004} note that their bias correction
based on twicing kernels is equivalent to a locally robust density
weighted average. By contrast, our bias correction detailed in the
previous section does not yield the same statistic as the one based on the
locally robust process.
In practice, there is no need to compute  $\widehat{\iota}_{s}(W)$ for each $s
 \in {\cal S}$ and to integrate. Simple algebra reveals that, if
 $\bm{K}$ denotes the $n\times n$ matrix of generic element
 $K_h(W_i-W_j)/[nh^p\widehat{f}(W_j)]$, then
\[
S_n^{LR} = \widehat{\bm{\varepsilon}}' \bm{A} \widehat{\bm{\varepsilon}}
- 2\widehat{\bm{\varepsilon}}' \bm{A} \bm{K} \widehat{\bm{\varepsilon}}
+ \widehat{\bm{\varepsilon}}' \bm{K}' \bm{A} \bm{K}\widehat{\bm{\varepsilon}}
\, , \]
so  the locally robust version of the statistic is practically straightforward to compute.

\section{Asymptotic Analysis}
\label{sec:Theory}

We here focus on the asymptotic expansion of the empirical processes
on which our test statistics $S_{n}^{BC}$ and $S_{n}^{LR}$ are based.
We do not formally consider the empirical process entering $S_{n}$:
its properties would be similar but would necessitate assuming some
undersmoothing.

We first introduce some definitions. Define the differential operator
\[
\partial^{l}g(w)  =
\frac{\partial^{|l|}}{\partial^{l_{1}}w_{1} \ldots
\partial^{l_{p}}w_{p}} g(w)
\qquad l=(l_{1},..,l_{p})', \ |l| = l_{1}+..+l_{p}
\, .
\]
\begin{defn}
(a)
$\mathcal{G}_\lambda(\mathcal{A}) = \left\{g:\mathcal{A}\mapsto\mathbb{R} : \quad \sup_{a\in\mathcal{A}}|\partial^l g(a)|< M \text{ for all } |l|\leq \lambda \right\}$.
(b) $\mathcal{K}_{\lambda}^{r}$ is the class of product univariate
kernels $k(\cdot)$ such that $k(\cdot)$ is of order $r$, $\lambda$
times continuously differentiable with uniformly bounded derivatives,
symmetric about zero, and with bounded support.
\end{defn}

\begin{hp}
\label{Ass0}
(i) $(Y_i, W_i, X_i), i = 1, \ldots n$, is a random sample from $(Y,W,X)$. $\mathcal{Y}\subset \Real$,
${\cal W} \subset \Real^p$, and ${\cal X} \subset \Real^{d}$, the
supports of $Y$, $W$, and $X$, are bounded.
(ii) ${\cal S}$ is a bounded compact subset of $\Real^d$ containing a
neighborhood of the origin.
(iii) $\varphi(\cdot)$ is an analytic non polynomial function
with $\partial^l\varphi(0)\neq0$ for all $l\in\mathbb{N}$.
\end{hp}

\begin{hp}
\label{Assumption: 1}
(i) $f(\cdot)\in \mathcal{G}_{r}({\Real^p})$ and $m(\cdot)\in
 \mathcal{G}_{r}(\cal W)$, with $r \geq   \lceil (p+1)/2
 \rceil$.\footnote{$\lceil x \rceil$ denotes the smallest integer
 above $x$.}
(ii) $\iota_s (\cdot)\in \mathcal{G}_{r}({\cal W})$ uniformly in $s
 \in {\cal S}$, with $r \geq  \lceil (p+1)/2 \rceil$.
 (iii) $K(\cdot)\in \mathcal{K}_{\lambda}^{r}$ with $\lambda \geq
 \lceil (p+1)/2 \rceil $.
\end{hp}

Define the uniform convergence rate of the kernel
density estimator as
\begin{align*}
 d_{n}&  = \sqrt{\frac{\log n}{n h_{n}^{p}}} + h_{n}^{r}
 \, .
\end{align*}
\begin{hp}
\label{Assumption: 3}
(i) $h_n \tau_{n}^{-1} = o(1)$. (ii)
$d_{n}\tau_{n}^{-3}=o(n^{-1/4})$. (iii)
$d_{n}h^{-|l|} \tau_{n}^{-(2+|l|)} = o(1)$ for $|l| =  \lceil (p+1)/2 \rceil$.
\end{hp}

\begin{hp}
\label{Assumption: 3bis}
(i) $p_{n}  = \Pr \left( f(W) \leq 3\tau_{n}/2 \right)$
is such that $p_{n}=o(n^{-1/2})$ ,
$p_{n} h_{n}^{-p}\tau_{n}^{-2}=o(n^{-1/4})$, and
$p_{n} h_{n}^{-(p+|l|)}\tau_{n}^{-(1+|l|)}=o(1)$
for $|l| =  \lceil (p+1)/2 \rceil$.

(ii) There exists $N$  such that for all $n\geq N$,
$\mathcal{W}_{n}=  \left\{ w : f(w) \geq
\tau_{n}/2 \right\}$ is convex.
\end{hp}

Assumption \ref{Ass0} ensures that the encompassing hypothesis can be
written as the continuum of moment conditions (\ref{eq:Moment
Condition}).  In particular, \citet[Theorem
2.2]{bierens_econometric_2017} builds on previous results by
\citet{bierens_consistent_1982} and
\citet{stinchcombe_consistent_1998}, and shows that
\ref{Ass0}-(iii) is sufficient. Intuitively, if $\varphi(\cdot)$ is analytic
non-polynomial, Equation (\ref{eq:Moment Condition}) implies that $Y -
\E(Y|W)$ is uncorrelated with any polynomial in $s'X$, and thus yields
the required equivalence.  This allows for different functions
$\varphi(\cdot)$ as previously detailed.

Assumption \ref{Assumption: 1} imposes conditions that are commonly
found in the literature on nonparametric estimation. In particular,
they impose that the density of $W$ is differentiable over $\cal W$
and its derivatives go smoothly to zero as we approach to the
boundaries of the support. Part (ii) is necessary only for the study
of the locally robust empirical process, which involves a
nonparametric estimator of  $\iota_s (\cdot)$.

Assumption \ref{Assumption: 3} sets the main conditions on the
bandwidths.  Together with Assumption \ref{Assumption: 1}, it implies
that the kernel estimators asymptotically belong to a class of
sufficiently smooth functions with limited entropy/complexity. This is
needed for the asymptotic stochastic equicontinuity of the empirical
processes at the basis of our test statistics. Assumptions
\ref{Assumption: 3} and \ref{Assumption: 3bis} impose conditions on
the bandwidths in connection with the trimming parameter. We need the
effect of trimming to vanish quickly enough to avoid large biases.
Abstracting from the appearance of the trimming, Condition (ii) in
Assumption \ref{Assumption: 3}  ensures
that the kernel estimators are $n^{-1/4}$-consistent. It requires in
particular that $n h_{n}^{4r}=o(1)$.
Without the bias correction or locally robust
approach employed in the construction of our statistics,
this condition would become $n h_{n}^{2r}=o(1)$
to ensure that the  bias of the nonparametric estimator is negligible
compared to the variability of the empirical process,
see  \citet{delgado_significance_2001} or
\citet{escanciano_uniform_2014}.
This means that our empirical processes have a small bias property (in
their stochastic expansion).  Our conditions allow the bandwidth to be
optimal for nonparametric estimation purposes and avoid the need for
undersmoothing.  The main practical advantage is that one can use for
instance a simple rule-of-thumb.\footnote{We note however that the
optimal nonparametric bandwidth is likely not optimal for estimation
of $\E \left[ \left( Y -\E(Y|W) \right) \varphi(s'X) \right]$.}  We
could extend our theoretical results to stochastic bandwidths to allow
for the use of data-driven methods, as was done in other contexts
\citep{andrews_nonparametric_1995, Mammen1992, Lav2001}. The details
would be more involved here because of the stochastic trimming, so we
do not pursue this issue further.

Assumption \ref{Assumption: 3bis} as a whole is used to ensure that
trimming has a negligible effect overall, see
e.g. \cite{LV96} or \cite{escanciano_uniform_2014} for similar assumptions.  It may be difficult to check, but our
simulations seem to indicate that trimming is actually not
crucial: when a data-driven bandwidth is used, results with no
trimming are comparable, and mostly better, than results with
trimming.

The following proposition establishes the influence function
representations of the empirical processes used in our statistics
$S_{n}^{BC}$ and $S_{n}^{LR}$.
\begin{prop}
\label{prop: Asymptotic Test}
Under Assumptions \ref{Ass0}--\ref{Assumption: 3bis},
$ \sqrt{n}\mathbb{P}_{n} (\widetilde{\varepsilon}\phi_{s} \widehat{t})
=\sqrt{n}\mathbb{P}_{n}(\varepsilon (\phi_{s} - \iota_s)) + o_{P}(1)$ and
$
\sqrt{n}\mathbb{P}_{n} (\widehat{\varepsilon}(\phi_{s}-\widehat{\iota}_s) \widehat{t})
= \sqrt{n}\mathbb{P}_{n}(\varepsilon (\phi_{s} - \iota_s)) + o_{P}(1)$
uniformly in $s \in {\cal S}$.
\end{prop}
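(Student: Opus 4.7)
The plan is to decompose, for each of the two processes, the estimated integrand into the target $\varepsilon(\phi_s-\iota_s)$ plus remainder pieces, and then to show that every remainder is $o_P(n^{-1/2})$ uniformly in $s\in\mathcal{S}$. I would first dispose of the random trimming. Under Assumption \ref{Assumption: 3bis} the probability $p_n=\Pr(f(W)\leq 3\tau_n/2)$ is $o(n^{-1/2})$, and combined with the uniform rate $\sup_w|\widehat{f}(w)-f(w)|=O_P(d_n)$ this yields $\sqrt{n}|\mathbb{P}_n(g(\widehat{t}-1))|=o_P(1)$ uniformly in $s$ for each bounded random integrand $g$ that appears. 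This reduces the analysis to $\widehat{t}\equiv 1$ and allows replacing $\widehat{f}$ by $f$ in kernel denominators on the set $\mathcal{W}_n$.

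\textbf{Part 1 (boosted estimator).} Writing $\widetilde{\varepsilon}_i=\varepsilon_i-(\widetilde{m}(W_i)-m(W_i))$, the first claim reduces to $\sqrt{n}\mathbb{P}_n((\widetilde{m}-m)\phi_s\widehat{t})=\sqrt{n}\mathbb{P}_n(\varepsilon\iota_s)+o_P(1)$ uniformly in $s$. A standard stochastic expansion gives $\widehat{m}(w)-m(w)=(nh^p f(w))^{-1}\sum_j\varepsilon_j K_h(W_j-w)+h^r B(w)+o_P(h^r)$ with $B$ smooth; the twist introduced by boosting, namely $\widetilde{m}=2\widehat{m}-\overline{\widehat{m}}/\widehat{f}$, cancels the $h^r B(w)$ term up to $O(h^{2r})$ while preserving the leading stochastic part. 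Substituting into $\mathbb{P}_n(\cdot)$ yields the V-statistic
\[
\frac{1}{n^2 h^p}\sum_{i,j}\frac{\varepsilon_j\phi_s(X_i)K_h(W_j-W_i)}{f(W_i)}.
\]
Its Hoeffding projection onto coordinate $j$ is $n^{-1}\sum_j\varepsilon_j\int K_h(W_j-w)\iota_s(w)\,dw=\mathbb{P}_n(\varepsilon\iota_s)+o_P(n^{-1/2})$, using $\E[\phi_s(X_i)K_h(W_j-W_i)/f(W_i)\mid W_j]=\iota_s(W_j)+O(h^r)$ by Assumption \ref{Assumption: 1}. The projection onto coordinate $i$ vanishes since $\E[\varepsilon_j K_h(W_j-W_i)\mid W_i]=0$, and the degenerate remainder is $O_P((nh^{p/2})^{-1})=o_P(n^{-1/2})$ under Assumption \ref{Assumption: 3}. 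The $O(h^{2r})$ deterministic bias of $\widetilde{m}$ contributes $\sqrt{n}\,O(h^{2r})=o(1)$, since $d_n=o(n^{-1/4})$ forces $h^{2r}=o(n^{-1/2})$.

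\textbf{Part 2 (locally robust).} I expand
\[
\widehat{\varepsilon}_i(\phi_s(X_i)-\widehat{\iota}_s(W_i))=\varepsilon_i(\phi_s(X_i)-\iota_s(W_i))-(\widehat{m}-m)(W_i)(\phi_s(X_i)-\iota_s(W_i))-\varepsilon_i(\widehat{\iota}_s-\iota_s)(W_i)+R_i,
\]
where $R_i=(\widehat{m}-m)(W_i)(\widehat{\iota}_s-\iota_s)(W_i)=O_P(d_n^2)$ uniformly, so $\sqrt{n}\mathbb{P}_n(R_i\widehat{t})=o_P(1)$ under Assumption \ref{Assumption: 3}. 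For $\sqrt{n}\mathbb{P}_n((\widehat{m}-m)(\phi_s-\iota_s)\widehat{t})$, the same V-statistic representation applies with $\phi_s-\iota_s$ in place of $\phi_s$; the projection onto coordinate $j$ is $n^{-1}\sum_j\varepsilon_j\cdot\E[\phi_s(X)-\iota_s(W)\mid W_j]=0$ by the defining property of $\iota_s$, the projection onto $i$ is zero by $\E[\varepsilon\mid W]=0$, and the degenerate and bias remainders are $o_P(n^{-1/2})$ as in Part 1. A symmetric argument handles $\sqrt{n}\mathbb{P}_n(\varepsilon(\widehat{\iota}_s-\iota_s)\widehat{t})$: both Hoeffding projections vanish thanks to $\E[\varepsilon\mid W]=0$, the bias-times-$\varepsilon$ cross term is $O_P(h^r/\sqrt{n})$ after summation, and the degenerate part is again $o_P(n^{-1/2})$. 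The expansion therefore collapses to $\sqrt{n}\mathbb{P}_n(\varepsilon(\phi_s-\iota_s))+o_P(1)$.

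\textbf{Main obstacle.} The hardest part will be making every step uniform in $s\in\mathcal{S}$. Assumption \ref{Ass0}(iii) and the compactness of $\mathcal{S}$ make $\{\phi_s:s\in\mathcal{S}\}$ and $\{\iota_s:s\in\mathcal{S}\}$ smooth, uniformly bounded parametric families, so the limit process $s\mapsto\sqrt{n}\mathbb{P}_n(\varepsilon(\phi_s-\iota_s))$ is immediately stochastically equicontinuous. What is delicate is ensuring that the random nuisance estimators $\widehat{m}$, $\widehat{f}$, and $\{\widehat{\iota}_s:s\in\mathcal{S}\}$ lie asymptotically in a bounded subset of $\mathcal{G}_\lambda$ with controlled covering numbers, so that the empirical processes having them as plug-ins remain equicontinuous uniformly in $s$. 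This is precisely the role of the smoothness requirement $\lambda\geq\lceil(p+1)/2\rceil$ in Assumptions \ref{Assumption: 1}(iii) and \ref{Assumption: 3}(iii). Coupling this entropic control with the vanishing-trimming bookkeeping from Assumption \ref{Assumption: 3bis}, while keeping all U-statistic projection remainders uniformly small, is the most intricate part of the argument.
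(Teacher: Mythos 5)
Your proposal takes a genuinely different route from the paper. Where you expand $\widehat{m}-m$ (and $\widetilde{m}-m$) into a leading stochastic term plus bias and plug this into V-statistics analyzed via Hoeffding projections, the paper never expands the nuisance estimator at all. Instead it (i) disposes of trimming via Lemma \ref{lem:littleo}, (ii) uses entropy bounds on smooth function classes (Lemma \ref{lem:Belonging Conditions}) to invoke a stochastic-equicontinuity result (Lemma \ref{lem: ASE}) that replaces $\mathbb{P}_n$ by $P$ in terms such as $\mathbb{P}_n(\widehat{B}\,t\,\phi_s)$, and (iii) exploits the algebraic identity $\widehat{B}=(\overline{\widehat{m}}-\overline{Y})/\widehat{f}$ together with a change of variables to produce the $-\iota_s$ correction directly. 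The effect of the bias correction is then not a pointwise bias reduction of $\widetilde{m}$, but an exact algebraic cancellation: the surviving non-trivial term $\sqrt{n}\mathbb{P}_n((m-\widehat{m})(\phi_s-\iota_s)t)$ has zero population mean because $\E[\phi_s(X)-\iota_s(W)\mid W]=0$, and stochastic equicontinuity then finishes the argument. This cleanly avoids the need to control degenerate U-statistic remainders and bandwidth-dependent stochastic expansions of the plug-in. Your route buys concreteness and closer contact with the classical semiparametric literature (Powell--Stock--Stoker style projections), while the paper's route buys cleaner bookkeeping and uniformity in $s$ almost for free.

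The one genuine gap in your outline concerns the boosted estimator. When $\widetilde{m}=2\widehat{m}-\overline{\widehat{m}}/\widehat{f}$, the leading stochastic part is not preserved: writing $\widehat{m}-m\approx s+\mathcal{B}[m]$ with $s$ the leading stochastic term, one finds $\widetilde{m}-m\approx (2s-\mathcal{S}_h s)-\mathcal{B}[\mathcal{B}[m]]$, where $\mathcal{S}_h$ denotes the Nadaraya--Watson smoothing operator. Substituting into $\mathbb{P}_n(\cdot)$ therefore produces a third-order V-statistic coming from $\mathcal{S}_h s$, not only the double-sum you wrote (which corresponds to $\widehat{m}$, not $\widetilde{m}$). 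The Hoeffding projections of $2s$ and of $\mathcal{S}_h s$ onto the error coordinate both give $\iota_s(W_j)$ up to $O(h^r)$, so the net coefficient is indeed $2-1=1$ and your final conclusion is correct; but the claim that boosting "preserves the leading stochastic part" glosses over this, and the triple-sum's degenerate remainder (plus the trimming inside $\overline{\widehat{m}}$, which breaks the clean $\mathcal{S}_h\widehat{m}$ picture) needs explicit control before the Hoeffding decomposition argument is complete. Your $O(h^{2r})$ bias claim for $\widetilde{m}$ is correct in the untrimmed expansion but should be re-derived with the trimming indicator present. The locally-robust part of your sketch is closer to complete, modulo a small slip: the vanishing projection should be written as $\E[\phi_s(X_i)-\iota_s(W_i)\mid W_i]=0$, evaluated inside the kernel-weighted integral over $W_i$, not a conditional expectation given $W_j$.
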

From the above result, under $H_0$ the limiting distribution, say $\mathbb{G}_s$, of both empirical
processes is the limiting one of $\sqrt{n}\mathbb{P}_{n}(\varepsilon (\phi_{s}
- \iota_s))$. This $\mathbb{G}_s$
is a zero-mean tight Gaussian process valued in $L^\infty(\cal S)$,
the space of uniformly bounded  functionals over $\cal S$, and
characterized by the collection of covariances
$\{\E (Y-m(W))^2 (\phi_s(X)-\iota_s(X))
(\phi_t(X)-\iota_t(W)) :s,t \in\mathcal{S}\}$.
Our asymptotic expansions directly yield
\[
S_{n}^{BC} \quad \mbox{or } \quad S_{n}^{LR} \
\cvd \int_{}^{}{\left|\mathbb{G}_s \right|^2 \, d\mu(s)}
\, .
\]
 However, the result  is not useful in practice since the covariance function of
 $\mathbb{G}_s$ depends on the unknown data generating process.
 In what follows, we develop a bootstrap procedure for obtaining
 critical and p-values.

\section{Bootstrap Tests}
\label{sec:Bootstrap}

 We use a wild bootstrap procedure that imposes the null
 hypothesis $H_{0}$ when resampling the observations. The
 bootstrap data generating process (DGP) writes as
 \begin{align*}
 Y_{i}^{*} & =\widehat{m}(W_{i}) + \xi_{i} \widehat{\varepsilon}_{i}
 \qquad \widehat{\varepsilon}_{i} = Y_{i}- \widehat{m}(W_{i})
 \, ,
\end{align*}
where $\{\xi_{i}, i=1, \ldots n\}$ is a sequence of
independent bootstrap weights with $\E\xi=0$ and
$\E\xi^{2}=1$.
From each bootstrap sample $\{\left(Y_{i}^*, W_i, X_i\right)\text{ :
}i=1,..,n\}$, we proceed as above to obtain $\widehat{\varepsilon}^{*}
= Y_{i}^{*}- \widehat{m}^{*}(W_{i})$ and
$\widetilde{\varepsilon}_{i}^{*} =
Y_{i}^{*}-\widetilde{m}^{*}(W_{i})$, where
\begin{align*}
\widetilde{m}^{*}(w)=\widehat{m}^{*}(w)-\widehat{B}^*(w)
& =  \widehat{m}^{*} (w)
 -
\left(  \frac{\overline{\widehat{m}^*}(w)}{\widehat{f} (w)} -
\widehat{m}^*(w) \right),
\qquad \widehat{m}^*=\overline{Y^*}/\widehat{f}
\, .
\end{align*}
The kernel smoothers $\overline{Y^*}$ and $\overline{\widehat{m}^*}$
are constructed in the same way as in Equation (\ref{smoothing of
m.hat}) to control for the random denominators in $Y^*$ and
$\widehat{m}^*$.
The bias correction could be the one used in the
original statistic, as done by \cite{xia_goodness--fit_2004}, however
we noted in simulations that recomputing the bias correction with
bootstrap data yields a better behavior.
We thus consider the bootstrap statistics
\begin{align}
n\int_{{\cal S}}
| \mathbb{P}_{n} \left( \widetilde{\varepsilon}^{*} \phi_{s} \widehat{t} \right) |^{2} \, d\mu(s)
& \quad \mbox{and }
\quad
n\int_{{\cal S}}
| \mathbb{P}_{n} \left( \widehat{\varepsilon}^{*}
( \phi_{s} - \widehat{\iota}_s) \widehat{t} \right) |^{2}
\, d\mu(s)
\, .
\label{eq:bootstat}
\end{align}
In practice, one can compute many bootstrap statistics and obtain their
$1-\alpha$ quantiles, denoted as ${q}_{1-\alpha}^{BC}$ and
${q}_{1-\alpha}^{LR}$.
The bootstrap test rejects $H_0$ whenever  $S_{n}^{j} > {q}_{1-\alpha}^{j}$, for
$j=BC$ or $LR$.

\begin{prop}
\label{prop: The Bootstrap Test}
Let $\{\xi_{i}, i=1, \ldots n\}$ be an i.i.d. sequence  with $\E\xi=0$,
$\E\xi^{2}=1$, and $\E|\xi|^{3}< \infty$ independent of the sample.
Under Assumptions \ref{Ass0}-\ref{Assumption: 3bis},

(a)
$ \sqrt{n}\mathbb{P}_{n} (\widetilde{\varepsilon}^{*}\phi_{s} \widehat{t})
=\sqrt{n}\mathbb{P}_{n}(\xi \varepsilon (\phi_{s} - \iota_s)) + o_{p}(1)$ and
$
\sqrt{n}\mathbb{P}_{n} (\widehat{\varepsilon}^*(\phi_{s}-\widehat{\iota}_s) \widehat{t})
=\sqrt{n} \mathbb{P}_{n}(\xi \varepsilon (\phi_{s} - \iota_s)) + o_{p}(1)$
uniformly in $s \in {\cal S}$.\footnote{Here the probability space is the joint probability on random bootstrap weights and sample data.}

(b)
Under $H_0$,
$
\Pr \left[ S_{n}^{j} > {q}_{1-\alpha}^{j} \right] \rightarrow \alpha
$, $j=BC$ or $LR$.

(c) Under $H_1$,
$\Pr \left[ S_{n}^{j} > {q}_{1-\alpha}^{j} \right] \rightarrow 1$,
$j=BC$ or $LR$.
\end{prop}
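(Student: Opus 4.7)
For part (a), the plan is to mimic the proof of Proposition~4.2 with the bootstrap DGP, working conditionally on the sample. Write the bootstrap residual as
\[
\widehat{\varepsilon}_i^{*} = Y_i^{*} - \widehat{m}^{*}(W_i) = \xi_i \widehat{\varepsilon}_i - (\widehat{m}^{*}(W_i) - \widehat{m}(W_i)),
\]
and observe that $\widehat{m}^{*} - \widehat{m}$ is a Nadaraya--Watson smooth of the variables $\xi_i \widehat{\varepsilon}_i$ against $W_i$. Conditional on the sample, the $\xi_i$ are i.i.d.\ with mean zero, unit variance and bounded third moment, so $\xi_i \widehat{\varepsilon}_i$ plays the role of $\varepsilon_i$ in the original expansion. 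One then reproduces the influence-function decomposition used for $\sqrt{n}\mathbb{P}_n(\widetilde{\varepsilon}\phi_s \widehat{t})$ and $\sqrt{n}\mathbb{P}_n(\widehat{\varepsilon}(\phi_s-\widehat{\iota}_s)\widehat{t})$ in Proposition~4.2, substituting $\xi_i \widehat{\varepsilon}_i$ for $\varepsilon_i$ throughout. The small-bias property carries over because the bootstrap bias correction $\widehat{B}^{*}$ is the exact analog of $\widehat{B}$, and the locally robust centering $\phi_s - \widehat{\iota}_s$ is unchanged. A final application of uniform consistency of $\widehat{m}$ replaces $\widehat{\varepsilon}_i$ by $\varepsilon_i$ in the leading term, delivering the claimed expansion uniformly in $s \in \mathcal{S}$.

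The main obstacle is uniformity in $s$. This requires re-doing the stochastic equicontinuity arguments underlying Proposition~4.2 on the enlarged probability space that includes the bootstrap weights. Since multiplication by bounded $\xi_i$ preserves the VC/Donsker complexity of the relevant function classes, the entropy bounds of Assumptions~\ref{Assumption: 1}--\ref{Assumption: 3bis} still yield asymptotic equicontinuity, and the remainder terms arising from $\widehat{m}^{*}-\widehat{m}$ and $\widehat{\iota}_s - \iota_s$ can be handled by the same truncation/trimming arguments together with the conditional moment assumption on $\xi$.

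For part (b), part (a) implies that, conditional on the sample, both bootstrap empirical processes converge weakly in $\ell^{\infty}(\mathcal{S})$ to the same tight Gaussian process $\mathbb{G}_s$ with covariance $\E(Y-m(W))^2(\phi_s(X)-\iota_s(X))(\phi_t(X)-\iota_t(X))$ that appears in Proposition~4.2. The continuous mapping theorem applied to $T \mapsto \int |T(s)|^{2} d\mu(s)$ then yields that the conditional law of each bootstrap statistic converges in probability to the law of $\int |\mathbb{G}_s|^{2} d\mu(s)$. Since the limiting CDF is continuous (the squared-Gaussian-process integral has an absolutely continuous distribution away from zero), the quantiles $q_{1-\alpha}^{j}$ converge in probability to the corresponding $(1-\alpha)$-quantile of $\int |\mathbb{G}_s|^{2} d\mu(s)$, which is also the limiting quantile of $S_n^{j}$ under $H_0$. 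Combining these pieces via a standard Polya-type argument gives $\Pr[S_n^{j} > q_{1-\alpha}^{j}] \to \alpha$.

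For part (c), the bootstrap DGP imposes $H_0$ by construction: $Y_i^{*} = \widehat{m}(W_i) + \xi_i \widehat{\varepsilon}_i$, so the "true" regression under the bootstrap law is $\widehat{m}$ which is consistent for $m(\cdot)=\E(Y|W)$ even when $H_1$ holds. Hence part (a) still applies, and the bootstrap statistic remains $O_p(1)$, so $q_{1-\alpha}^{j} = O_p(1)$. In contrast, under $H_1$ the process $\sqrt{n}\mathbb{P}_n(\widetilde{\varepsilon}\phi_s \widehat{t})$ (or its locally robust analog) has a drift of order $\sqrt{n}$ for some $s$, because $\E[(Y-m(W))\phi_s(X)] \neq 0$ on a set of positive $\mu$-measure by the Bierens-type equivalence in Assumption~\ref{Ass0}. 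Therefore $S_n^{j}/n$ converges in probability to a strictly positive limit, and $S_n^{j} \to \infty$ in probability, yielding $\Pr[S_n^{j} > q_{1-\alpha}^{j}] \to 1$.
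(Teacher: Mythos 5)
Your plan tracks the paper's proof closely in all three parts: part (a) repeats the influence-function expansion of Proposition~\ref{prop: Asymptotic Test} with $\xi\widehat{\varepsilon}$ in the role of $\varepsilon$, part (b) combines conditional weak convergence, continuity of the limit law, and a Polya-type argument, and part (c) contrasts a bounded-in-probability bootstrap quantile against a statistic that diverges at rate $n$ under $H_1$. Two places where you are imprecise relative to what Proposition~\ref{prop: The Bootstrap Test} actually requires: first, the bootstrap weights are only assumed to satisfy $\E|\xi|^3<\infty$, not boundedness, so your statement that \emph{``multiplication by bounded $\xi_i$ preserves the VC/Donsker complexity''} does not apply directly; the paper instead bounds bracketing numbers via $N_{[\cdot]}(\epsilon\|\xi\|_2,\, \xi\cdot\mathcal{G}_n,\, \|\cdot\|_{2,P}) \leq N_{[\cdot]}(\epsilon,\mathcal{G}_n,\|\cdot\|_{2,P})$ and invokes the conditional multiplier CLT of van der Vaart and Wellner (Theorem 2.9.6) which needs precisely the third-moment condition. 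Second, $\widehat{m}^*-\widehat{m}$ is \emph{not} simply a kernel smooth of $\xi_i\widehat{\varepsilon}_i$: since $\widehat{m}^* = (\overline{\widehat{m}}+\overline{\xi\widehat{\varepsilon}})/\widehat{f}$ and $\widehat{m}=\overline{Y}/\widehat{f}$, one has $\widehat{m}^*-\widehat{m} = \widehat{B} + \overline{\xi\widehat{\varepsilon}}/\widehat{f}$, so the old bias-correction term $\widehat{B}$ enters and must be tracked separately, which the paper does through Lemma~\ref{lem:littleo}(vi)--(viii). Neither slip is fatal, but both need to be fixed before the argument is rigorous; with those repairs your proof is the paper's proof.
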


\section{Numerical Results}
\label{sec:Simulations}

\subsection{Small Sample Behavior}

We used a DGP in line with our example in  Section \ref{sec:EncSignif}.
Specifically,
\[
Y = m(W) + h(W) X + \gamma \delta(X) + \eta\, , \qquad  \E(\eta|W,X) = 0
\, ,
\]
where $\eta \sim N(0,1)$ is independent of $(W,X)$, $W \sim
N(0,\sigma^{2} = (1/2)^2)$,
\[
f_{X|W}(x|w)  = p(w) \varphi(x; 1/4) + (1-p(w)) \varphi(x; 3/4)
\, ,
\]
with $\varphi(\cdot; s)$  the density of a $N(0, \sigma^2 =s^2)$ and
$p(\cdot)$  the cumulative distribution function of $W$. We set
$m(w)  = w^3 - 2 w$,
$h(w)  = w^2 - (1/2)^2$,
$\delta(x) = x^4 - 3 x^2$.
When $\gamma = 0$, ${\cal M}_W$ encompasses ${\cal M}_X$, while it
does not when $\gamma \neq 0$.

To study the behavior of our tests based on $S_n^{BC}$ and $S_n^{LR}$
and compare them to the test based on the uncorrected ICM statistic
$S_n$, we ran 10000 simulations for different values of $\gamma$ and
for sample sizes $n=$200 and 400.  For our implementation, the
weighting function was $a(x) = \operatorname{sinc}(\pi x)$, which
corresponds to a uniform $\mu$ with complex exponential function
$\varphi(\cdot)$.  The observations on $X$ were transformed by the
logistic c.d.f. then standardized before being passed as arguments to
$a(\cdot)$.  As there is no clear way to choose the amount of
trimming, we trimmed the 2\% more extreme observations in a first
step, and we subsequently investigated the influence of trimming.  For
bootstrapping, we used the two-point distribution defined through $
\Pr ( \xi = \frac{3-\sqrt{5}}{2}) = \frac{5 + \sqrt{5}}{10}$ and $\Pr(
\xi = \frac{3+\sqrt{5}}{2}) = \frac{5 - \sqrt{5}}{10}$.  We chose this
simple distribution with third central moment equal to one in the hope
to better approximate the distribution of the statistic, as is the
case in simpler setups \citep{Mammen1992}.     To speed up computations, we used
the warp-speed method proposed by
\cite{davidson_improving_2007} and studied by
\cite{giacomini_warp-speed_2013}. Specifically, we drew one bootstrap
sample for each simulated data, and we used the whole set of bootstrap
statistics to compute the bootstrap p-values associated with each
original statistic.

We employed Gaussian kernels of order 2 for nonparametric
estimation. In a first step, we used the bandwidth rule
$h=C\widehat{\sigma}_W n^{-1/5}$, where $\widehat{\sigma}_{W}$ is the
estimated standard deviation of $W$.  To check the performance of our
tests under different bandwidth choices, we let the constant $C$ vary.
We report actual rejection probabilities for 5\% and 10\% nominal
sizes in Table
\ref{tab:erp}. For $C=.5$ or $C=1$, the three tests have  equal size
control and the empirical size becomes closer to the nominal level
when increasing the sample size.  When $C$ increases to 1.5 then 2,
size control deteriorates for the uncorrected ICM test, while it improves for our two tests.
We also report in Figure \ref{Fig o} errors in rejection probability (ERP), that is
the difference between the empirical rejection proportion and the
nominal size under the null hypothesis $H_0$.  A perfect test would
exhibit an ERP of zero for any nominal size. This gives us a visual
way to evaluate whether the null distribution of the test statistic is
well approximated by its bootstrap approximation. These graphs clearly
show the high sensitivity of the ICM test to the bandwidth choice, and
the relative robustness of our two procedures. We complemented our study with an
investigation of the trimming influence.  While we do not report
details, the most striking result we obtained was that the absence of
trimming adversely affected the behavior of the locally robust test for large
bandwidths.

Following the suggestion of a referee, we investigated the issue of bandwidth
selection.  Since there is no role for the bandwidth in the first
order expansion of our test statistics, we decided to implement a
data-driven bandwidth that should be asymptotically optimal for
nonparametric estimation. We used an improved version of the Akaike
Information Criterion proposed by \citet{hurvich1998smoothing}, which
has been found to have excellent small sample performances by
\cite{Li-cross-2004}. The criterion is
\begin{equation*}
    AIC_c=\log\left(\frac{1}{n}\sum_{i=1}^n [Y-\widehat m (W_i)]^2\right)
    +\frac{1+tr(H)/n}{1-[tr(H)+2]/n}\, ,
\end{equation*}
where $H$ is the “hat matrix'' such that $\left( \widehat m
(W_1),\ldots,\widehat m(W_n)\right)'= H
\,\left(Y_1,\ldots,Y_n\right)'$.  As seen from Table \ref{tab:erp},
with this data-driven bandwidth, the three  tests have good size
control.
We also report
errors in rejection probability in Figure \ref{Fig i}, which shows the
excellent adequation of the bootstrap approximation. We then repeated
our experiment without any trimming and observed that ERP are closer
to zero for our tests compared to the case where trimming is
implemented.

Finally, for a data-driven bandwidth and no trimming, we report in
Figure \ref{Fig ii} the power of the three tests when $\gamma$
varies. Our two tests basically have the same power, and power
increases with sample size, while the uncorrected ICM test is
significantly less powerful.

\begin{table}
\centering
\begin{tabular}{l | r  r  r  r |  r  r}
& \multicolumn{4}{c|}{ $h = C \widehat{\sigma}_W n^{-1/5}$ } & \multicolumn{2}{c}{ Data-driven $h$ } \\
\hline
& $C=0.5$ & $C=1$ & $C=1.5$ & $C=2$ &
Trimming & No trimming \\
\hline
$n=200$  & & & & & & \\
\hline
Bias Corrected &
12.15 & 11.13 & 10.35 & 9.83 & 11.34 & 11.43\\
& 6.86 & 5.64 & 5.56 & 5.08 & 5.74 & 5.79\\

\hline
Locally Robust &
12.08 & 11.05 & 10.39 & 9.71 & 11.48 & 11.27\\
& 6.77 & 5.67 & 5.38 & 5.36 & 5.77 & 5.67\\

\hline
ICM &
12.27 & 11.09 & 11.51 & 13.51 & 11.49 & 11.31\\
& 6.65 & 5.93 & 6.02 & 7.18 & 6.05 & 6.13\\

\hline
$n=400$  & &  &  &  &  &  \\
\hline
Bias Corrected &
11.41 & 10.74 & 10.31 & 9.43 & 10.70 & 10.64\\
& 5.90 & 5.60 & 5.33 & 4.99 & 5.73 & 5.79\\
\hline
Locally Robust &
11.32 & 10.73 & 10.17 & 9.70 & 10.81 & 10.24\\
& 5.91 & 5.64 & 5.21 & 4.91 & 5.58 & 5.75\\

\hline
ICM &
11.13 & 10.19 & 10.93 & 13.24 & 10.46 & 10.42\\
& 6.03 & 5.69 & 5.71 & 6.69 & 5.85 & 5.80\\
\hline
\end{tabular}
\caption{Empirical rejection percentages under $H_0$ at 10\% and 5\% nominal levels.}
\label{tab:erp}
\end{table}

\subsection{Empirical Illustration}

We apply our encompassing tests to competing models of consumption behavior, in the spirit of  \citet{gaver1974discriminating} and \citet[Chapter 8]{greene2003econometric}.
The first model assumes that consumption only depends on income, and relates consumption to current and past income. The second assumes the presence of habits in consumption behavior, and relates consumption to current income and past consumption. Formally, the first model is
\begin{equation}
\label{eq: cons model 1}
    C_{i}=g_1(I_{i},PI_{i})+\varepsilon_{i}\, 
    \qquad \E[\varepsilon_{i}|I_{i},PI_{i}]=0\, ,
\end{equation}
where $C_{i}$ denotes  (log of) consumption of household $i$, $I_{i}$ denotes (log of) income of household $i$, and $PI_i$ denotes  past (log of) income of household $i$. 
The second model is
\begin{equation}
\label{eq: cons model 2}
     C_{i}=g_2(I_{i},PC_{i})+\eta_{i}\, 
     \qquad \E[\eta_{i}|I_{i},PC_{i}]=0\, ,
\end{equation}
where $PC_i$ denotes (log of) past consumption of household $i$. 

We used data compiled by \citet{arellano2017earnings} from  the Panel Study of Income Dynamics, which concerns $n=792$ households. We focus on the first two time periods, namely the years $1999$ and $2001$, so that $C_i$ and $I_i$ correspond to variables in 2001, while $PC_i$ and $PI_i$ correspond to  1999.
We first tested if Model (\ref{eq: cons model 1}) encompasses Model (\ref{eq: cons model 2}), i.e.
\begin{equation}
\label{eq: empirical null hypothesis 1}
    H_0\, : \E[g_1(I_{i},PI_{i})\,|\,I_{i},PC_{i}] = g_2(I_{i},PC_{i})
    \, .
\end{equation}
We then tested the reversed hypothesis that Model (\ref{eq: cons model 2}) encompasses Model (\ref{eq: cons model 1}), i.e.,
\begin{equation}
\label{eq: empirical null hypothesis 2}
    H_0'\,:\, \E[g_2(I_{i},PC_{i})|I_{i},PI_{i}]=g_1(I_{i},PI_{i})
    \, .
\end{equation}
We applied our encompassing tests with the bandwidth selected  by the $AIC_c$ criterion and 999 bootstrap samples drawn to obtain critical values. Table \ref{table: empirics 1 and 2} reports the values of  our test statistics, together with their 95 percent bootstrap quantiles and bootstrap p-values. 
The first encompassing hypothesis $H_0$ is clearly rejected, hence Model (\ref{eq: cons model 1}) does not encompass  Model (\ref{eq: cons model 2}). The encompassing hypothesis $H_0'$ however cannot be rejected by any of our tests, even at a 10\% nominal level. This suggests that consumption may be  adequately modeled as a function of income and past consumption, in a way coherent with the consumption habit formation theory. 

\begin{table}[ht]
\centering
\begin{tabular}{l | r | r | r }
 &  Statistics &  5\% Critical Value &  P-value \\
  \hline
  Test of $H_0$ &  & &  
 \\
 \hline
  Bias Corrected & 3.0848 &  0.1820 &  0.0000 \\
 Locally Robust & 3.1125 &  0.1968 &  0.0000 \\
ICM & 3.3948 &  0.2499 &  0.0000 \\
 \hline 
 {Test of $H'_0$} & & &
 \\
\hline
  Bias Corrected & 0.0069  & 0.0294  & 0.8799 \\
  Locally Robust & 0.0355  & 0.0447  & 0.1912 \\
ICM & 0.0249  & 0.0539  & 0.5485 \\
   \hline
\end{tabular}
\label{table: empirics 1 and 2}
\caption{Test of the encompassing hypotheses $H_0$ in (\ref{eq: empirical null hypothesis 1}) and $H_0'$ in (\ref{eq: empirical null hypothesis 2})}
\end{table}

\section{Concluding Remarks}

We have studied two general approaches to obtain a small bias property
for our encompassing test statistic.  These approaches could
potentially be used in other estimation and testing problems involving
a first-step nonparametric estimation. Our  simulation
experiment seems to indicate that when coupled with a data-driven
bandwidth these two approaches are  robust to trimming choices and
deliver good performances. 

\section{Proofs}
\label{sec:Proofs}

For any real or complex valued function $g(\cdot)$, we denote with  $||g||_\infty$ the supremum norm  taken over the support of its argument and
$P g  = \int g(z) \, dP(z)$.
We  define the population counterpart of $\widehat{t}$ as
$t(w)=\1(f(w)\geq \tau_n)$.
By convention, $(\overline{Y}/\widehat{f})(w)$ will be set to 0 whenever $\widehat{f}(w)=0$. The same notation will hold for
$\overline{\widehat{m}}/\widehat{f}$, $\overline{\phi}_s/\widehat{f}$,  $\overline{Y^*}/\widehat{f}$, and $\overline{\widehat{m}^*}/\widehat{f}$.
$C$ denotes a generic constant that may vary from line to line.

\subsection{Proof of Proposition \ref{prop: Asymptotic Test}}
(a) We  study  the bias corrected  empirical process
\begin{align}
 \sqrt{n}\mathbb{P}_{n} (\widetilde{\varepsilon}\phi_{s} \widehat{t})
& =
\sqrt{n}\mathbb{P}_{n} \left(  \left( Y - \widehat{m} + \widehat{B}
\right) \phi_{s} \widehat{t} \right)
\nonumber \\
&=
\sqrt{n}\mathbb{P}_{n} (\varepsilon \phi_{s} \widehat{t})
+
\sqrt{n}\mathbb{P}_{n} \left(  (m -  \widehat{m}) \phi_{s} \widehat{t} \right)
+
\sqrt{n}\mathbb{P}_{n} ( \widehat{B} \phi_{s} \widehat{t})
\, .
\label{eq:align1}
\end{align}
Since $|\varepsilon|$ is bounded and $|\phi_{s}(\cdot)|$ is
uniformly bounded,
\[
\sup_s \left| \sqrt{n}\mathbb{P}_{n} (\varepsilon \phi_{s} (\widehat{t} - 1)
) \right|
\leq C \sqrt{n}\mathbb{P}_{n} |\widehat{t} - 1|
= o_p(1)
\]
by Lemma \ref{lem:littleo}-(i).
Hence,   $\sqrt{n}\mathbb{P}_{n} (\varepsilon \phi_{s} \widehat{t})
 =\sqrt{n}\mathbb{P}_{n} (\varepsilon \phi_{s} )
+ o_{p}^{\mathcal{S}}(1)$,
 where $o_{p}^{\mathcal{S}}(1)$ denotes a uniform in $s \in {\cal S}$ $o_p(1)$.
For the second term, as  $(\widehat{t}-t)=
\widehat{t}^{2}-t^{2}=(\widehat{t}+t)(\widehat{t}-t)$,
and  $ \| (m - \widehat{m}) (\widehat{t} +t) \|_{\infty} = o_{p}
(n^{-1/4})$ by Lemma \ref{lem:littleo}-(ii),
\[
\sup_s  \Big| \sqrt{n} \mathbb{P}_{n} (m -  \widehat{m}) \phi_{s} (\widehat{t}-t) \Big|
  \leq  C
 \| (m -  \widehat{m}) (\widehat{t} + t) \|_{\infty}
 \sqrt{n} \mathbb{P}_{n} |\widehat{t}-t| = o_p(1)
\, .
\]

We  now  show that the  third  term is such that
\[
\sqrt{n}\mathbb{P}_{n} ( \widehat{B} \phi_{s} \widehat{t})
 =
- \sqrt{n}\mathbb{P}_{n} (\varepsilon \iota_s )
- \sqrt{n}\mathbb{P}_{n} \left(  (m -  \widehat{m}) \iota_s {t}
\right)
+o_{p}^{\mathcal{S}}(1)
\, .
\]
First, use $\|\widehat{B} (\widehat{t} + t) \|_{\infty} =
o_{p}(n^{-1/4})$, see Lemma \ref{lem:littleo}-(iii), and similar arguments
as above to show that $\sqrt{n}\mathbb{P}_{n} ( \widehat{B} \phi_{s}
\widehat{t}) = \sqrt{n}\mathbb{P}_{n} ( \widehat{B} \phi_{s} {t})
+o_{p}^{\mathcal{S}}(1)$.
Lemma \ref{lem:Belonging Conditions}-(i)-(ii) ensures in addition that
$\what{B}\in \mathcal{G}_{l}(\mathcal{W}_{n})$ with probability
approaching one, where $l =\lceil (p+1)/2 \rceil$ and $\mathcal{W}_n :=
\left\{ w : f(w) \geq \tau_{n}/2 \right\}$.
So,    Lemma \ref{lem: ASE}-(i) yields
\begin{align}
\label{ASE for bias correction}
 \sqrt{n}\mathbb{P}_{n} (\widehat{B} t \phi_{s}) & = \sqrt{n} P
 (\widehat{B} t \phi_{s}) + o_{p}^{\mathcal{S}}(1)
 \, ,
 \\
 P (\widehat{B} t \phi_s) & =
\int_{}^{}{\widehat{B}(w) t(w) \iota_s(w) f(w) \, dw}
=
 P (\widehat{B} t \iota_s)
\, .
\end{align}
Since
$\| [\widehat{B}-(\overline{\what{m}}-\overline{Y})/ f ] t \|_{\infty}
= o_p(n^{-1/2})$  from Lemma \ref{lem:littleo}-(iv) and $\iota_s$ is uniformly bounded,
\begin{align*}
\sqrt{n}  P ( \widehat{B} t \iota_{s})
& =
\sqrt{n} \int (\overline{\widehat{m}}(w)-\overline{Y}(w)) t(w)
\iota_s(w) \, dw
+ o_p^\mathcal{S}(1)
\\
& = - \sqrt{n}\mathbb{P}_{n} \left[ (Y-\widehat{t}\widehat{m})
\int K(u) (t \iota_{s})(W+uh)\, du \right]  + o_p^{\mathcal{S}}(1)
\\
& = - \sqrt{n}\mathbb{P}_{n} \left[ \left( \varepsilon -(\what{m}-m)\what{t}
- m (\what{t} -1) \right)
\int K(u) (t \iota_{s})(W+uh)\, du \right]  + o_p^{\mathcal{S}}(1)
\, ,
\end{align*}
where the second equality follows from a change of variable.
Let us  deal with each term separately.  Since
$\varepsilon$ is bounded and  $|t(w+uh)-1|=\1\{f(w+uh)<\tau_n\}$,
\begin{align*}
\Big| \sqrt{n} \mathbb{P}_{n} \varepsilon  \int K(u) \iota_{s}(W+uh) & [t(W+uh)-1] \, du \Big|
\leq
 C  \int|K(u)| \sqrt{n} \mathbb{P}_{n} \1\left\{ f(W+uh)< \tau_{n}\right\} \, du
 \, .
\end{align*}
By a mean-value expansion, for all $(u,w)\in\text{Supp}(K) \times \mathbb{R}^p$ we have
$f(w+hu) - f(w) =  \partial^T f(\widetilde{w}) u h$,
for some $\widetilde{w}$. By Assumptions \ref{Ass0}-\ref{Assumption: 1},
$
 \left|
 \partial^T f(w) uh\right|
 \leq C h \quad \text{for all }(u,w) \in \text{Supp}(K) \times\mathbb{R}^p
$.
Since $h\tau_n^{-1}=o(1)$,  for $n$ large enough
$
 \1\{f(w+uh)<\tau_n\} \leq
 \1\{ f(w)\leq3\tau_{n}/2\}
$.
Now use $\mathbb{P}_{n} \1\Big\{ f(W)\leq3\tau_{n}/2\Big\} = O_{P}(p_{n}) = o_P(n^{-1/2})$,  from Assumption \ref{Assumption: 3bis}(i), to obtain
\[
\sqrt{n}\mathbb{P}_{n} \left[\varepsilon
\int K(u) (t \iota_{s})(W+uh)\, du \right]
=
\sqrt{n}\mathbb{P}_{n} \left[\varepsilon
\int K(u) \iota_s(W+uh)\, du  \right]
+ o_p^{\mathcal{S}}(1)
\, .
\]
The same  reasoning allows the same replacement in the second and
third term of the decomposition, using that $\|(\what{m}-m)\what{t}\|_\infty
$ is $o_p(1)$  by Lemma
\ref{lem:littleo}-(ii) and that $m(\widehat{t}-1)$ is bounded.

Then
\begin{align}
\sqrt{n}  P (\widehat{B} t \iota_{s})
& =
- \sqrt{n}\mathbb{P}_{n} \left[ \varepsilon \int K(u) \iota_s(W+uh)\,
du \right]
+ \sqrt{n}\mathbb{P}_{n} \left[ (\widehat{m}-m) \widehat{t} \int K(u)
\iota_s(W+uh)\, du \right]
\nonumber \\
&
+ \sqrt{n}\mathbb{P}_{n} \left[ m (\what{t} -1)  \int K(u)
\iota_s(W+uh)\, du \right]
+o_p^{\mathcal{S}}(1)
\, .
\label{eq:align2}
\end{align}
Lemma \ref{lem: ASE}(ii) yields
\[
\sqrt{n}\mathbb{P}_{n}\left( \varepsilon \int K(u) \iota_s(W+uh)\, du \right) =
\sqrt{n}\mathbb{P}_{n} (\varepsilon \iota_s) + o_p^{\mathcal{S}}(1)
\, ,
\]
as $P \left(\varepsilon \int K(u) \iota_s(W+uh)\, du
\right) = P \left(\varepsilon \iota_s \right) = 0
$. A Taylor expansion of order $r$ guarantees that $\int
K(u)\iota_s(w+uh)du=\iota_s(w)+O(h^r)$ uniformly in $\mathcal{S}\times
\mathcal{W}$ as $\iota_s(\cdot)$ has uniformly bounded derivatives of order $r$. Since
$\|(\what{m}-m)\what{t}\|_{\infty} = o_p(n^{-1/4})$ and $nh^{4r}=o(1)$,
\begin{align*}
\sqrt{n}\mathbb{P}_{n} \left( (\widehat{m}-m) \widehat{t} \int K(u)
\iota_s(W+uh)\, du \right)
& =
\sqrt{n}\mathbb{P}_{n}( (\widehat{m}-m)
\widehat{t} \iota_s ) +o_p^{\mathcal{S}}(1)
\, .
\end{align*}
Now use similar arguments as before to replace $\widehat{t}$ by $t$.
The last term in (\ref{eq:align2}) is negligible. Indeed,
  the argument of $\mathbb{P}_n$ is uniformly bounded, and
$\sqrt{n} \mathbb{P}_{n}|t-1|= o_P(1)$ from Lemma
\ref{lem:littleo}-(i).  Hence,
\begin{align*}
\sqrt{n} P (\widehat{B} t \iota_{s})
& =
- \sqrt{n}\mathbb{P}_{n} (\varepsilon \iota_s)
+ \sqrt{n}\mathbb{P}_{n} ( (\widehat{m}-m) {t} \iota_s)
+o_p^{\mathcal{S}}(1)
\, .
\end{align*}

Gathering results,
\[
\sqrt{n}\mathbb{P}_{n} (\widetilde{\varepsilon}\phi_{s} \widehat{t})
 =
\sqrt{n}\mathbb{P}_{n} (\varepsilon (\phi_{s} - \iota_s) )
+
\sqrt{n}\mathbb{P}_{n} \left(  (m -  \widehat{m}) (\phi_{s} - \iota_s)
{t}  \right)
+o_p^{\mathcal{S}}(1)
\, .
\]
From arguments similar to those used for
(\ref{ASE for bias correction}), since $\| (m -  \widehat{m}) t\|_\infty =
o_{p}(n^{-1/4})$ and $\what{m}\in \mathcal{G}_l(\mathcal{W}_{n})$ with probability
approaching one,
\[
\sqrt{n}\mathbb{P}_{n} \left(  (m -  \widehat{m}) (\phi_{s} - \iota_s)
{t}  \right)
=
\sqrt{n}P \left(  (m -  \widehat{m}) (\phi_{s} - \iota_s)
{t}  \right)
+o_p^{\mathcal{S}}(1)
\, .
\]
But $\sqrt{n}P \left(  (m -  \widehat{m}) (\phi_{s} - \iota_s)
{t}  \right) = 0$  as $\E (\phi_s(X)|W) = \iota_s(W)$.
Finally, we obtain
$
\sqrt{n}\mathbb{P}_{n} (\widetilde{\varepsilon}\phi_{s} \widehat{t})
 =
\sqrt{n}\mathbb{P}_{n} (\varepsilon (\phi_{s} - \iota_s) ) +o_p^{\mathcal{S}}(1)
$ as expected.

(b) We now study the locally robust empirical process
\begin{align}
 \sqrt{n}\mathbb{P}_{n} (\widehat{\varepsilon} (\phi_{s} - \what{\iota}_s) \widehat{t})
& =
\sqrt{n}\mathbb{P}_{n} \left(  \left( Y - \widehat{m} \right)
(\phi_{s} -\what{\iota}_s) \widehat{t} \right)
\nonumber \\
&=
\sqrt{n}\mathbb{P}_{n} (\varepsilon (\phi_{s} - \what{\iota}_s)  \widehat{t})
+
\sqrt{n}\mathbb{P}_{n} \left(  (m -  \widehat{m}) (\phi_{s} - \what{\iota}_s)  \widehat{t} \right)
\, .
\label{eq:align3}
\end{align}
A similar reasoning as in Part (a) allows replacing $\widehat{t}$ by $t$ and yields
\begin{align*}
 \sqrt{n}\mathbb{P}_{n} (\widehat{\varepsilon} (\phi_{s} - \what{\iota}_s) \widehat{t})
& = \sqrt{n}\mathbb{P}_{n} (\varepsilon (\phi_{s} - \what{\iota}_s)  {t})
+
\sqrt{n}\mathbb{P}_{n} \left(  (m -  \widehat{m}) (\phi_{s} -
\what{\iota}_s) {t} \right)
+o_p^{\mathcal{S}}(1)
\, ,
\end{align*}
based on the boundedness of $\varepsilon$, $\sqrt{n} \mathbb{P}_{n} |\widehat{t} - t| = o_{p}(1)$, and the boundedness in probability of
$\sup_s \| (\phi_s - \what{\iota}_s)  {t}\|_\infty$, $\sup_s \| (\phi_s - \what{\iota}_s)  \widehat{t}\|_\infty$, $\|(\what{m}-m)t\|_\infty$, and $\|(\what{m}-m)\what{t}\|_\infty$, see Lemma \ref{lem:littleo}.
From Lemma \ref{lem:littleo}, $\sup_s \|(\what{\iota}_s - \iota_s)t\|_\infty =
o_{p}(n^{-1/4})$ and $\|(m-\what{m})t\|_\infty=o_P(n^{-1/4})$, hence
\[
\sqrt{n}\mathbb{P}_{n} \left(  (m -  \widehat{m}) (\phi_{s} -
\what{\iota}_s) {t} \right)
 =
\sqrt{n}\mathbb{P}_{n} \left(  (m -  \widehat{m}) (\phi_{s} -
{\iota}_s) {t} \right)
+o_p^{\mathcal{S}}(1)
=
o_p^{\mathcal{S}}(1)
\, ,
\]
where the last equality is established in Part (a).
Now,
\[
\mathbb{P}_n \varepsilon (\phi_s-\widehat{\iota}_s) t =
\mathbb{P}_n \varepsilon (\phi_s - \iota_s) t
- \mathbb{P}_n \varepsilon (\widehat{\iota}_s - \iota_s) t
\, ,
\]
and
$
\sup_s  \left|\sqrt{n}\mathbb{P}_{n} (\varepsilon (\phi_{s} - \iota_s) (t - 1)
) \right|
\leq C \sqrt{n}\mathbb{P}_{n}  |t -1|
= o_P(1)$
by Lemma \ref{lem:littleo}-(i).
Last,
$
\sqrt{n}\mathbb{P}_n \varepsilon (\widehat{\iota}_s - \iota_s) t = o_{p}^\mathcal{S}(1)
$
by Lemma \ref{lem: ASE}-(i).
Gathering results, $ \sqrt{n}\mathbb{P}_{n} (\widehat{\varepsilon} (\phi_{s} - \what{\iota}_s) \widehat{t})
 =  \sqrt{n}\mathbb{P}_{n} ({\varepsilon} (\phi_{s} - {\iota}_s) ) +
  o_p^{\mathcal{S}}(1) $.

\subsection{Proof of Proposition  \ref{prop: The Bootstrap Test}}
We here consider statements relative to $P^\xi\otimes P$, the joint probability measure of both the bootstrap weights and the sample data.

(a) We  study the bootstrap version of the bias corrected empirical process
\begin{align*}
 \sqrt{n}\mathbb{P}_{n} (\widetilde{\varepsilon}^{*}\phi_{s} \widehat{t})
& =
\sqrt{n}\mathbb{P}_{n} \left(  \left( Y^{*} - \widehat{m}^{*} + \widehat{B}^{*}
\right) \phi_{s} \widehat{t} \right)
\nonumber \\
&=
\sqrt{n}\mathbb{P}_{n} (\xi \varepsilon \phi_{s} \widehat{t})
+
\sqrt{n}\mathbb{P}_{n} \left( \xi (m -  \widehat{m}) \phi_{s} \widehat{t} \right)
+
\sqrt{n}\mathbb{P}_{n} \left( (\widehat{m} -  \widehat{m}^{*})
\phi_{s} \widehat{t} \right)
+
\sqrt{n}\mathbb{P}_{n} ( \widehat{B}^{*} \phi_{s} \widehat{t})
\, .
\end{align*}
From here we only stress the differences with the proof of Proposition 1.
Proceeding as in the latter proof and using results in Lemma \ref{lem:littleo},
\begin{align*}
\sqrt{n}\mathbb{P}_{n} (\xi \varepsilon \phi_{s} \widehat{t})
 &= \sqrt{n}\mathbb{P}_{n} (\xi \varepsilon \phi_{s} {t})
+o_P^{\mathcal{S}}(1)
\, ,
 \\
\sqrt{n}\mathbb{P}_{n} \left( \xi (m -  \widehat{m}) \phi_{s} \widehat{t} \right)
&=
\sqrt{n}\mathbb{P}_{n} \left( \xi (m -  \widehat{m}) \phi_{s}
{t} \right)
+ o_P^{\mathcal{S}}(1)
\, ,
\\
\sqrt{n}\mathbb{P}_{n} \left( (\widehat{m} -  \widehat{m}^{*})
\phi_{s} \widehat{t} \right)
&=
\sqrt{n}\mathbb{P}_{n} \left( (\widehat{m} -  \widehat{m}^{*})
\phi_{s} {t} \right)
+ o_P^{\mathcal{S}}(1)
\, ,
\\
\sqrt{n}\mathbb{P}_{n} ( \widehat{B}^{*} \phi_{s} \widehat{t})
&  =
 \sqrt{n}\mathbb{P}_{n} ( \widehat{B}^{*} \phi_{s} {t})
+ o_P^{\mathcal{S}}(1)
\, .
\end{align*}

For the last term, use $\|\widehat{B}^{*} t \|_{\infty} =
o_{P}(n^{-1/4})$, Lemma \ref{lem:Belonging Conditions}-(ii)-(iv)-(v), and Lemma \ref{lem: ASE}-(i)
 to show that
\begin{equation}
\label{ASEb}
 \sqrt{n}\mathbb{P}_{n} (\widehat{B}^{*} t \phi_{s}) = \sqrt{n} P
 (\widehat{B}^{*} t \phi_{s}) + o_{P}^{\mathcal{S}}(1)
 \, .
\end{equation}
From the law of iterated expectations,
$ P (\widehat{B}^{*} t \phi_s) = P (\widehat{B}^{*}
t \iota_s)$.
Since
$\| [\widehat{B}^{*}-(\overline{\what{m}^{*}}-\overline{Y^{*}})/ f ] t \|_{\infty}
= o_P(n^{-1/2})$  from Lemma \ref{lem:littleo}-(viii),
\begin{align*}
\sqrt{n}  P ( \widehat{B}^{*} t \iota_{s})
& =
\sqrt{n} \int (\overline{\widehat{m}^{*}}(w)-\overline{Y^{*}}(w)) t(w)
\iota_s(w) \, dw
+ o_P^\mathcal{S}(1)
\\
& = - \sqrt{n}\mathbb{P}_{n} \left[ (Y^{*}-\widehat{m}^{*})\widehat{t}
\int K(u) (t \iota_{s})(W+uh)\, du \right]  + o_P^{\mathcal{S}}(1)
\, ,
\end{align*}
where the second equality follows from a change of variable.  Replace
$(Y^{*}-\what{m}^{*})\what{t}$ by $\xi\varepsilon {t} + \xi\varepsilon(\what{t}-t)+\xi(m-\what{m})\what{t}+(\what{m}-\what{m}^*)\what{t}$  and proceed as in Proposition 1 to obtain
\begin{align*}
\sqrt{n} P (\widehat{B}^{*} t \phi_{s})
& =
- \sqrt{n}\mathbb{P}_{n} (\xi \varepsilon \iota_st)
- \sqrt{n}\mathbb{P}_{n} ( (\widehat{m}-\what{m}^*) {t} \iota_s)
- \sqrt{n}\mathbb{P}_n\xi(m-\what{m})t\iota_s
+o_P^{\mathcal{S}}(1)
\, .
\end{align*}
Gathering results,
\begin{align*}
\sqrt{n}\mathbb{P}_{n} (\widetilde{\varepsilon}^{*}\phi_{s} \widehat{t})
 =&
\sqrt{n}\mathbb{P}_{n} (\xi \varepsilon (\phi_{s} - \iota_s) {t}) +
\sqrt{n}\mathbb{P}_{n} \left(\xi  (m -  \widehat{m}) (\phi_{s} - \iota_s)
{t}  \right)
\\
& \qquad
+ \sqrt{n}\mathbb{P}_n((\what{m}-\what{m}^*)t(\phi_s-\iota_s))
+o_P^{\mathcal{S}}(1)
\, .
\end{align*}
A reasoning similar to Proposition 1 then yields $
\sqrt{n}\mathbb{P}_{n} (\xi \varepsilon (\phi_{s} - \iota_s) )t =
\sqrt{n}\mathbb{P}_{n} (\xi \varepsilon (\phi_{s} - \iota_s) )
+o_P^{\mathcal{S}}(1)
$, $\sqrt{n}\mathbb{P}_{n} \left(\xi  (m -  \widehat{m}) (\phi_{s} - \iota_s)
{t}  \right) = o_P^{\mathcal{S}}(1)$, and $\sqrt{n}\mathbb{P}_n((\what{m}-\what{m}^*)t(\phi_s-\iota_s))=o_P^\mathcal{S}(1)$.
Hence
$$\sqrt{n}\mathbb{P}_n \widetilde{\varepsilon}^*\what{t}\phi_s=\sqrt{n}\mathbb{P}_n\xi\varepsilon(\phi_s-\iota_s)+o_P^\mathcal{S}(1).$$

For the locally robust empirical process,
\begin{align*}
 \sqrt{n}\mathbb{P}_{n} (\widehat{\varepsilon}^{*} (\phi_{s} - \what{\iota}_s) \widehat{t})
& =
\sqrt{n}\mathbb{P}_{n} (\xi \varepsilon (\phi_{s} - \what{\iota}_s)
\widehat{t})
\\ &
+
\sqrt{n}\mathbb{P}_{n} \left(  \xi (m -  \widehat{m}) (\phi_{s} -
\what{\iota}_s)  \widehat{t} \right)
+
\sqrt{n}\mathbb{P}_{n} \left(  (\widehat{m} - \widehat{m}^{*}) (\phi_{s} - \what{\iota}_s)  \widehat{t} \right)
\, .
\end{align*}
The proof proceeds along similar lines to show that the first term equals
$\sqrt{n}\mathbb{P}_{n} (\xi \varepsilon (\phi_{s} - {\iota}_s) ) +
o_P^{\mathcal{S}}(1)$ and the other terms
are both $o_P^{\mathcal{S}}(1)$.


{(b)} Since the
class $\{(y,w,x)\mapsto(y-m(w))(\phi_s(x)-\iota_s(w)):s\in
\mathcal{S}\}$ is Donsker, $\sqrt{n}\mathbb{P}_n\varepsilon(\phi_s-\iota_s)$ converges weakly to a tight zero-mean Gaussian process $\mathbb{G}_s$ under $H_0$, see the main text. By the continuity of the Cramer-Von Mises functional and Proposition 1,  $S_n^{BC}$ and $S_n^{LR}$ weakly converge to
$\int|\mathbb{G}_s|^2\mu(ds)$. From \cite{bentkus_asymptotic_1993}, this distribution is continuous, so pointwise convergence implies  uniform convergence.

Using \citet[Theorem 2.9.6]{van_der_vaart_weak_1996}, we get  weak convergence of $\sqrt{n}\mathbb{P}_n \xi \varepsilon(\phi_s-\iota_s)$ in probability conditionally upon the initial sample to a tight zero-mean Gaussian process $\mathbb{G}'_s$ with the same covariance function as $\mathbb{G}_s$.
From (a), the bootstrap statistics (\ref{eq:bootstat})
weakly converge to $\int|\mathbb{G}'_s|^2\mu(ds)$ in probability conditionally upon the initial sample. The desired result then follows.

{(c)} From (a) and a Glivenko-Cantelli property of the class $\{(y,w,x)\mapsto(y-m(w))(\phi_s(x)-\iota_s(w)):s\in
\mathcal{S}\}$,
$\mathbb{P}_n\widetilde{\varepsilon}\what{t}\phi_s=
P \varepsilon (\phi_s - \iota_s) +o_P^\mathcal{S}(1)$ and similarly for
$\mathbb{P}_n\widehat{\varepsilon}(\phi_s-\what{\iota})\what{t}$.
Hence,  $S_n^{LR}/n$ and $S_n^{BC}/n \cvp \int|\E\varepsilon (\phi_s - \iota_s) |^2d\mu(s)$.
From \citet[Theorem 2.2]{bierens_econometric_2017}, under $H_1$, $\E\varepsilon (\phi_s - \iota_s) \neq 0 $ for almost all $s$ and  $\int \left|\E\varepsilon
(\phi_s - \iota_s) \right|^2 \, d\mu(s)>0$.
From (b), it holds that the bootstrap statistics are bounded in probability, and the result follows.

\subsection{Auxiliary Lemmas\label{sec:Auxiliary-Lemmas}}

\begin{lem}
\label{lem:littleo}
Under Assumptions \ref{Ass0}-\ref{Assumption: 3bis},
\begin{enumerate}[label=(\roman*)]
\item $\mathbb{P}_{n}|t-1|=o_{p}(n^{-1/2})$ and
$\mathbb{P}_{n}|\widehat{t}-t|=o_{p}(n^{-1/2})$,
\item $\|(\widehat{m}-m) t \|_{\infty} = o_p(n^{-1/4})$ and
 $\|(\widehat{m}-m) \what{t}\|_{\infty} = o_p(n^{-1/4})$,
\item $\|\widehat{B} t \|_{\infty} = o_p(n^{-1/4})$ and
 $\|\widehat{B} \what{t}\|_{\infty} = o_p(n^{-1/4})$,
\item
$\| [\widehat{B}-(\overline{\what{m}}-\overline{Y})/ f ] t \|_{\infty} = o_p(n^{-1/2})$,
\item $ \sup_s \|(\widehat{\iota}_s-\iota_s) t \|_{\infty} = o_p(n^{-1/4})$ and
 $ \sup_s \|(\widehat{\iota}_s-\iota_s) \what{t}\|_{\infty} = o_p(n^{-1/4})$,
 \item $\|(\what{m}^*-m)t\|_\infty=o_P(n^{-1/4})$ and $\|(\what{m}^*-m)\what{t}\|_\infty=o_P(n^{-1/4})$,
 \item $\|\what{B}^*t\|_\infty=o_P(n^{-1/4})$ and $\|\what{B}^*\what{t}\|_\infty=o_P(n^{-1/4})$,
 \item $\|[\what{B}^*-(\overline{\what{m}^*}-\overline{Y^*})/f]t\|_\infty=o_P(n^{-1/2})$.
\end{enumerate}
\end{lem}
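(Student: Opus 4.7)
The eight statements fall into three groups: (i) measures how quickly the trimming events become inactive; (ii)--(v) give uniform rate bounds for the kernel estimators and the bias correction in the original sample; (vi)--(viii) are the bootstrap analogs. I would rely throughout on the standard uniform kernel rate $\sup_{w}|\widehat{f}(w)-f(w)|$ and $\sup_w|\overline{Y}(w) - m(w)f(w)|$ being $O_P(d_n)$, which follows from Assumptions \ref{Ass0} and \ref{Assumption: 1} by the classical Nadaraya--Watson uniform-consistency arguments (e.g.\ applied to the class $\{K_h(W-\cdot)\}$ and $\{YK_h(W-\cdot)\}$).

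For (i), write $\mathbb{P}_n|t-1| = n^{-1}\sum_i \1\{f(W_i)<\tau_n\}$, an average with mean $p_n=o(n^{-1/2})$ by Assumption \ref{Assumption: 3bis}(i); a Markov bound (or Bernstein) on the centered average yields $\mathbb{P}_n|t-1|=o_p(n^{-1/2})$. For $\mathbb{P}_n|\widehat t - t|$, observe that $\widehat t(W_i)\neq t(W_i)$ forces $|f(W_i)-\tau_n|\leq |f(W_i)-\widehat f(W_i)|\leq \|\widehat f-f\|_\infty = O_P(d_n)$. Since $d_n = o(\tau_n)$ by Assumption \ref{Assumption: 3}(i) (through $h_n\tau_n^{-1}=o(1)$ and the trimming conditions), for $n$ large this event is contained in $\{f(W_i)\leq 3\tau_n/2\}$, whose empirical probability is $O_P(p_n)$ again.

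For (ii), on $\{t=1\}=\{f\geq \tau_n\}$ I would use the identity $(\widehat m -m)= [(\overline Y-mf)-m(\widehat f - f)]/\widehat f$, bound $\widehat f \geq \tau_n/2$ uniformly on this set (using $\|\widehat f-f\|_\infty = O_P(d_n)=o_P(\tau_n)$), and hence conclude $\|(\widehat m-m)t\|_\infty = O_P(d_n/\tau_n)=o_P(n^{-1/4})$ by Assumption \ref{Assumption: 3}(ii). The version with $\widehat t$ follows from (i) and the fact that on $\{\widehat t=1,t=0\}$ we have $f(w)<\tau_n\leq 2\widehat f(w)$ so that the same bounds apply. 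Part (v) is identical to (ii) with $\phi_s$ in the place of $Y$ and $\iota_s$ in the place of $m$, uniformly in $s$ thanks to Assumption \ref{Assumption: 1}(ii) and the uniform (in $s$) kernel rate of $\overline{\phi_s}$.

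For (iii), the key observation is the algebraic identity $\widehat m = \overline Y/\widehat f$, giving
\[
\widehat B \;=\; \frac{\overline{\widehat m} - \overline Y}{\widehat f}\, .
\]
On $\{t=1\}$ both $\overline Y(w)$ and $\overline{\widehat m}(w)$ are within $O_P(d_n)$ of $m(w)f(w)$ (the latter using $\|(\widehat m-m)\widehat t\|_\infty=o_P(1)$ together with the uniform kernel rate), so $|\overline{\widehat m}-\overline Y|=O_P(d_n)$ uniformly, hence $\|\widehat B t\|_\infty = O_P(d_n/\tau_n)=o_P(n^{-1/4})$. For (iv), the same identity produces the exact algebraic decomposition
\[
\widehat B - \frac{\overline{\widehat m}-\overline Y}{f} \;=\; (\overline{\widehat m}-\overline Y)\,\frac{f-\widehat f}{\widehat f\, f}\, ,
\]
and on $\{t=1\}$ this is $O_P(d_n\cdot d_n/\tau_n^{2})=O_P(d_n^{2}\tau_n^{-2})$. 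Since $d_n\tau_n^{-3}=o(n^{-1/4})$ implies $d_n^2 \tau_n^{-2}\leq d_n^2 \tau_n^{-6}= o(n^{-1/2})$ (using $\tau_n\to 0$), the desired $o_P(n^{-1/2})$ bound follows. This is the main step: it explains the small-bias property, as the first-order bias $(\overline{\widehat m}-\overline Y)/f$ is exactly what is subtracted by the boosting correction, and what remains is quadratic in the pointwise error.

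For the bootstrap parts (vi)--(viii), the DGP $Y_i^* = \widehat m(W_i)+\xi_i\widehat\varepsilon_i$ gives
\[
\overline{Y^*}(w) - m(w)\widehat f(w) \;=\; [\overline{\widehat m}(w)-m(w)\widehat f(w)] + (nh^p)^{-1}\sum_i \xi_i\widehat\varepsilon_i K_h(W_i-w).
\]
The first bracket is $O_P(d_n)$ by (ii)--(iii) combined with the fact that $\overline{\widehat m}$ has the same uniform rate as $\overline Y$ up to $\|(\widehat m-m)\widehat t\|_\infty$. The second term is a conditional (given the data) empirical process with independent mean-zero multipliers $\xi_i$ and bounded envelope; by a standard exponential inequality (or \citet{van_der_vaart_weak_1996}, Section 2.9) conditional on the sample it is $O_P(\sqrt{\log n/(nh^p)})$, thus also $O_P(d_n)$. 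Substituting into the arguments of (ii)--(iv) with $\widehat m^*$, $\overline{Y^*}$, $\overline{\widehat m^*}$ in place of $\widehat m$, $\overline Y$, $\overline{\widehat m}$, and invoking the identity $\widehat B^* = (\overline{\widehat m^*}-\overline{Y^*})/\widehat f$, yields (vi), (vii), (viii) with the same rates.

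The main obstacle I expect is handling (iv) (and its bootstrap version (viii)) with the right powers of $\tau_n$. Everything hinges on (a) keeping the algebraic cancellation $\widehat B = (\overline{\widehat m}-\overline Y)/\widehat f$ explicit so the remainder is genuinely quadratic in $d_n$, and (b) bookkeeping the trimming carefully so that the denominators are controlled by $\tau_n$ but the resulting exponents still fit under Assumption \ref{Assumption: 3}(ii). The uniformity in $s$ for (v) is a secondary technical nuisance, dealt with by noting that the class $\{\phi_s:s\in\mathcal S\}$ is uniformly bounded and smooth in its index $s$, so the kernel smoothing rate transfers uniformly over $s$.
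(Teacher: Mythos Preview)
Your approach is essentially the paper's, and the overall architecture (uniform kernel rates, algebraic identity $\widehat B=(\overline{\widehat m}-\overline Y)/\widehat f$, quadratic remainder for (iv), multiplier bootstrap for (vi)--(viii)) is correct. But your rate bookkeeping for $\overline{\widehat m}$ is too optimistic and, as stated, wrong.

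The key slip is in (iii). By definition $\overline{\widehat m}(w)=(nh^p)^{-1}\sum_i \widehat m(W_i)\,\widehat t_i\,K_h(W_i-w)$ carries the trimming $\widehat t_i$ \emph{inside} the sum. Decomposing as the paper does,
\[
\overline{\widehat m}=\overline m+\overline{m(\widehat t-1)}+\overline{(\widehat m-m)\widehat t},
\]
the second term is $O_P(h^{-p}\mathbb P_n|\widehat t-1|)=O_P(p_n h^{-p})$ and the third is $\|(\widehat m-m)\widehat t\|_\infty\cdot O_P(1)=O_P(d_n/\tau_n)$, not $O_P(d_n)$. Hence $\|\overline{\widehat m}-mf\|_\infty=O_P(d_n/\tau_n+p_nh^{-p})$, so $\|\widehat B t\|_\infty=O_P(d_n/\tau_n^2+p_n/(h^p\tau_n))$ and, for (iv), the remainder is $O_P(d_n^2/\tau_n^3+p_nd_n/(h^p\tau_n^2))$. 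These are still $o_P(n^{-1/4})$ and $o_P(n^{-1/2})$ respectively, but only because of the full strength of Assumptions~\ref{Assumption: 3}(ii) and \ref{Assumption: 3bis}(i); your $O_P(d_n)$ claim for $\overline{\widehat m}-mf$ is false and bypasses exactly the conditions the paper needs. The same correction propagates to (vi)--(viii), where the extra smoothing layer in $\overline{\widehat m^*}$ picks up yet another $\tau_n^{-1}$, which is why the paper's Assumption~\ref{Assumption: 3}(ii) has $\tau_n^{-3}$ rather than $\tau_n^{-2}$.

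A minor point: in (i) you invoke Assumption~\ref{Assumption: 3}(i) for $d_n=o(\tau_n)$, but $h_n\tau_n^{-1}=o(1)$ does not imply this; it follows instead from $d_n\tau_n^{-3}=o(n^{-1/4})$ in \ref{Assumption: 3}(ii).
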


\begin{proof}
$(i)$ Notice first that $|t-1|=\1(f(\cdot)<\tau_n)$ and
\begin{equation*}
    \E\mathbb{P}_n\1(f(W)<\tau_n)\leq \text{Pr}(f(W)\leq3\tau_n /2)=p_n=o(n^{-1/2})
\end{equation*}
where the last equality follows from Assumption \ref{Assumption: 3bis}(i). So, by Markov's inequality, $\mathbb{P}_n|t-1|=o_P(n^{-1/2})$.\\
To prove the second part of $(i)$, define the event $\mathcal{A}_C:=\{\|\widehat{f}-f\|_\infty\leq C d_n \}$. From Lemma \ref{Lemma : Uniform convergence of kernel sums} $\|\what{f}-\E\what{f}\|_\infty=O_P(\sqrt{(\log n)/(nh^p)})$. Standard bias manipulations ensure that $\|\E\what{f}-f\|_\infty=O(h^r)$. So, $\|\what{f}-f\|_\infty=O(d_n)$ and by choosing $C$ large enough $\text{Pr}(\mathcal{A}_C)$ can be made arbitrarily close to 1 for each large $n$. Over such event, since $d_n\tau_n^{-1}=o(1)$ (see Assumption \ref{Assumption: 3}), for each $n$ large enough $1-\tau_n^{-1}(\what{f}(w)-f(w))\leq 3/2$ for all $w\in\mathcal{W}$. Thus,
$$\1\Big( \what{f}(w)\geq \tau_n \Big)=\1\Big( f(w)\geq \tau_n\Big[ 1-\frac{\what{f}(w)-f(w)}{\tau_n} \Big] \Big)\geq \1\Big( f(w)\geq 3\tau_n/2 \Big)$$
so that $f(w)\geq(3/2)\tau_{n}$ implies $t(w)=\widehat{t}(w)=1$.
Accordingly, over $\mathcal{A}_{C}$ for large enough $n$
\[
 |\widehat{t}(w)-t(w)|\leq\1\Big( f(w)<3\tau_{n}/2\Big)\text{ for all }w\in\mathcal{W}\,
\]
and
\[
  \mathbb{P}_n|\what{t}-t|\leq \mathbb{P}_{n} \1\Big( f(W)<3\tau_{n}/2\Big)=O_P(p_n)=o_P(n^{-1/2})\, ,
\]
where the last two equalities follow from Markov's inequality and Assumption \ref{Assumption: 3bis}(i). Conclude by recalling that for $C$ large enough $\text{Pr}(\mathcal{A}_C)$ can be made arbitrarily close to 1 for each large $n$.

$(ii)$ Consider the event $\mathcal{A}_C$ previously defined and fix $\delta \in (0,1]$. Since $d_n\tau_n^{-1}=o(1)$, over such event for each large $n$ we have $1+2[f(w)-\what{f}(w)]/(\delta \tau_n)\leq 2 $ for all $w\in\mathcal{W}$. So,
\begin{equation*}
\1\Big( \what{f}(w)\geq \delta \tau_n/2 \Big)=\1\Big( f(w)\geq \frac{\delta \tau_n}{2} \Big[ 1+2\frac{f(w)-\what{f}(w)}{\delta \tau_n} \Big] \Big)\geq \1\Big( f(w)\geq \delta \tau_n \Big).
\end{equation*}

As already seen, by choosing $C$ large enough $\text{Pr}(\mathcal{A}_C)$ can be made arbitrarily close to 1 for each large $n$. So, we obtain that
\begin{equation}\label{eq: trimming inequality 1}
\text{ for each $\delta\in(0,1]$ wpa1 : }\1\Big( f(w)\geq \delta \tau_n \Big)\leq \1\Big( \what{f}(w) \geq \delta \tau_n/2 \Big)\text{ for all }w\in\mathcal{W},
\end{equation}
where wpa1 stands for "with probability approaching one". Switching the roles of $\what{f}$ and $f$, by a similar argument we obtain that
\begin{equation}\label{eq: trimming inequality 2}
    \text{ for each $\delta\in(0,1]$ wpa1 : } \1\Big( \what{f}(w)\geq \delta \tau_n \Big)\leq \1\Big( f(w) \geq \delta \tau_n/2 \Big)\text{ for all }w\in\mathcal{W}.
\end{equation}
Now, for any fixed $\delta\in(0,1]$ (\ref{eq: trimming inequality 1}) implies that with probability approaching one
\begin{align*}
(\what{m}-m)\text{ }\1(f(\cdot)\geq \delta \tau_n)=&\frac{\overline{Y}-m\what{f}}{f}\text{ } \1(f(\cdot)\geq \delta \tau_n) \\
&+\frac{\overline{Y}-m\what{f}}{f}\text{ }  \frac{f-\what{f}}{\what{f}} \text{ } \1(f(\cdot)\geq \delta \tau_n) \text{ } \1(\what{f}(\cdot)\geq \delta \tau_n/2)\, .
\end{align*}

To bound the RHS, from Lemma \ref{Lemma : Uniform convergence of kernel sums} $\|\overline{Y}-\E\overline{Y}\|_\infty=O_P(\sqrt{(\log n)/(nh^p)})$, while standard bias computations yield $\|\E\overline{Y}-mf\|_\infty=O(h^r)$. Hence, $\|\overline{Y}-mf\|_\infty=O_P(d_n)$. Similarly, $\|\what{f}-f\|_\infty=O_P(d_n)$ . Using these rates and the above display gives
\begin{equation}\label{eq: rate for mhat-m}
    \|(\what{m}-m)\text{ }\1(f(\cdot)\geq\delta\tau_n)\|_\infty=O_P\left( \frac{d_n}{\tau_n}+\frac{d_n^2}{\tau_n^2} \right)=o_P(n^{-1/4})
\end{equation}
where the last equality follows from Assumption \ref{Assumption: 3}. Since $\delta\in(0,1]$, the LHS of the above display is an upper bound for $\|(\what{m}-m)t\|_\infty$. So $\|(\what{m}-m)t\|_\infty=o_P(n^{-1/4})$. To obtain the rate with $\what{t}$, an application of (\ref{eq: trimming inequality 2}) with $\delta=1$ gives that  with probability approaching one $\|(\what{m}-m)\what{t}\|_\infty\leq \|(\what{m}-m)\text{ }\1(f(\cdot)\geq \tau_n/2)\|_\infty$. Applying (\ref{eq: rate for mhat-m}) with $\delta=1/2$ gives the $n^{-1/4}$ rate for the RHS of the latter inequality. 

$(iii)$ Since $\widehat{B}=\overline{\what{m}}/\what{f} - \what{m}$, in view of $(ii)$ it suffices to obtain a rate for $\overline{\what{m}}/\what{f}$.
To this end, notice that
\begin{equation} \label{eq: decomposition of mhatoverline}
\overline{\what{m}}(w)=\overline{m}(w)+\overline{m(\what{t}-1)}(w)+\overline{(\what{m}-m)\what{t}}(w) 
\, .
\end{equation}

Combining Lemma \ref{Lemma : Uniform convergence of kernel sums} with standard bias computations gives $\|\overline{m}-mf\|_\infty=O_P(d_n)$. For the second term, the boundedness of $K$ and $m$ implies $|\overline{m(\widehat{t}-1)}(w)|\leq C h^{-p} \mathbb{P}_n|\widehat{t}-1|$. Using the same arguments as in the proof of $(i)$, $\mathbb{P}_n|\widehat{t}-1|\leq \mathbb{P}_n|\widehat{t}-t| + \mathbb{P}_n|t-1| =O_P(p_n)$. So, $\|\overline{m(\what{t}-1)}\|_\infty=O_P(p_nh^{-p})$. For the third term on the RHS of (\ref{eq: decomposition of mhatoverline})
$$\overline{(\what{m}-m)\what{t}}(w)\leq \|(\what{m}-m)\what{t}\|_\infty h^{-p} \mathbb{P}_n\Big|K\Big(\frac{W-w}{h}\Big)\Big|=O_P\Big(\frac{d_n}{\tau_n}\Big)\cdot h^{-p}\mathbb{P}_n\Big|K\Big(\frac{W-w}{h}\Big)\Big|$$
where in the last equality we have used $\left\| (\what{m}-m)\what{t}\right\|_\infty=O_P(d_n/\tau_n) $ from the proof of $(ii)$. An application of Lemma \ref{Lemma : Uniform convergence of kernel sums} and standard bias computations yield that the   last term on the RHS is $O_P(1)$ uniformly in $w\in\mathcal{W}$.  Gathering results,
\begin{equation}\label{eq: mhatoverline}
    \|\overline{\what{m}}-mf\|_\infty=O_P\left( d_n+p_nh^{-p}+\frac{d_n}{\tau_n}  \right) \, .
\end{equation}
Combining the above display with arguments similar to the proof of $(ii)$ gives
\begin{equation}\label{eq: mhatoverline/fhat}
    \left\| \left( \frac{\overline{\what{m}}}{\what{f}} - m \right) \what{t}  \right\|_\infty =O_P\left( \frac{d_n}{\tau_n^2} + \frac{p_n}{h^p\tau_n}\right)\text{ and }\left\| \left( \frac{\overline{\what{m}}}{\what{f}} - m \right) {t}  \right\|_\infty =O_P\left( \frac{d_n}{\tau_n^2} + \frac{p_n}{h^p\tau_n}\right) \, .
\end{equation}
Using Assumption \ref{Assumption: 3} and \ref{Assumption: 3bis}(i) gives the desired result.

$(iv)$ Since $\what{B}=\overline{\what{m}}/\what{f}-\overline{Y}/\what{f}$, Equation (\ref{eq: trimming inequality 1}) implies that with probability approaching one
\begin{equation*}
\Big( \what{B}-\frac{\overline{\what{m}}-\overline{Y}}{f} \Big)\text{ }t=(\overline{\what{m}}-\overline{Y})\Big( \frac{f-\what{f}}{\what{f}f}\text{ } \Big)\text{ }t\text{ }\1(\what{f}(\cdot)\geq \tau_n/2)  \, .
\end{equation*}
Using (\ref{eq: mhatoverline}), $\|\overline{Y}-mf\|_\infty=O_P(d_n)$, and $\|\what{f}-f\|_\infty=O_P(d_n)$ (see the proof of $(ii)$), the RHS of the above display is
$$O_P\Big( \Big( \frac{d_n}{\tau_n^2} + \frac{p_n}{h^p\tau_n} \Big)\text{ }\frac{d_n}{\tau_n} \Big)=o_P(n^{-1/2})$$
uniformly in $w\in\mathcal{W}$, where the last equality follows from Assumptions \ref{Assumption: 3} and \ref{Assumption: 3bis}(i). 

$(v)$ The proof follows from arguments similar to the proof of $(ii)$, so it is omitted.

$(vi)$ First, notice that
\begin{equation}\label{eq: decomposition of mhatsar}
    \what{m}^*=\frac{\overline{Y^*}}{\what{f}}=\frac{\overline{\what{m}}}{\what{f}}+ \frac{\overline{\xi\what{\varepsilon}}}{\what{f}}.
\end{equation}
The uniform convergence rate of $\overline{\what{m}}/\what{f}$ has already been obtained in (\ref{eq: mhatoverline/fhat}), so it suffices to show that the second addendum is negligible with a suitable rate. Now, $\overline{\xi\what{\varepsilon}}=\overline{\xi\varepsilon}+\overline{\xi\varepsilon(\what{t}-1)}+\overline{\xi(m-\what{m})\what{t}}$.  Using this decomposition and proceeding as in the proof of  $(iii)$ gives
\begin{equation}\label{eq: rate for overlinexiepshat/fhat}
    \left\| \frac{\overline{\xi\what{\varepsilon}}}{\what{f}}\what{t} \right\|_\infty =O_P\left( \frac{d_n}{\tau_n^2} + \frac{p_n}{h^p\tau_n}\right)\text{ and } \left\| \frac{\overline{\xi\what{\varepsilon}}}{\what{f}}{t} \right\|_\infty=O_P\left( \frac{d_n}{\tau_n^2} + \frac{p_n}{h^p\tau_n}\right)
\end{equation}
with $d_n\tau_n^{-2}+p_nh^{-p}\tau_n^{-1}=o(n^{-1/4})$.

$(vii)$ Recall that
\begin{equation*}
    \what{B}^*=\frac{\overline{\what{m}^*}}{\what{f}}-\what{m}^*\text{ . }
\end{equation*}
In view of $(iv)$, it suffices to obtain a suitable convergence rate for the first addendum. To this end, from (\ref{eq: mhatoverline/fhat}), (\ref{eq: decomposition of mhatsar}), and (\ref{eq: rate for overlinexiepshat/fhat}) we have
\begin{equation}\label{eq: rate for mhatstar-m}
    \| (\what{m}^*-m)\what{t} \|_\infty =O_P\left( \frac{d_n}{\tau_n^2}+\frac{p_n}{h^p\tau_n} \right)\text{ .}
\end{equation}
Also, $\overline{\what{m}^*}=\overline{m}+\overline{m(\what{t}-1)}+\overline{(\what{m}^*-m)\what{t}}$. Using this decomposition, (\ref{eq: rate for mhatstar-m}), and proceeding similarly as in the proof of $(iii)$ gives
\begin{equation*}
    \left\|  \left( \frac{\what{m}^*}{\what{f}} - m \right)\what{t} \right\|_\infty  =O_P\left( \frac{d_n}{\tau_n^3} + \frac{p_n}{h^p\tau_n^2} \right)\text{ and } \left\|  \left( \frac{\what{m}^*}{\what{f}} - m \right){t} \right\|_\infty  =O_P\left( \frac{d_n}{\tau_n^3} + \frac{p_n}{h^p\tau_n^2} \right)\text{ . }
\end{equation*}
By Assumptions \ref{Assumption: 3} and \ref{Assumption: 3bis}(i) we obtain the desired result.

$(viii)$ The proof combines arguments already used previously. For completeness, we also provide it here. Using the definition of $\what{B}^*$ and (\ref{eq: trimming inequality 1}), with probability approaching one
\begin{equation}\label{eq: Bhatstar decomposition}
\Big( \what{B}^*-\frac{\overline{\what{m}^*}-\overline{Y^*}}{f} \Big)\text{ }t=(\overline{\what{m}^*}-\overline{Y^*})\Big( \frac{f-\what{f}}{f\what{f}} \Big)\text{ }t \text{ }\1(\what{f}(\cdot)\geq\tau_n/2)\text{ .}
\end{equation}
As noticed in the proof of $(vii)$,  $\overline{\what{m}^*}=\overline{m}+\overline{m(\what{t}-1)}+\overline{(\what{m}^*-m)\what{t}}$. By this decomposition, (\ref{eq: rate for mhatstar-m}), and reasoning as in the proof of $(iii)$ we get
\begin{equation}\label{eq: mhatstaroverline - mf}
\| \overline{\what{m}^*}-mf \|_\infty=O_P\Big( \frac{d_n}{\tau_n^2} + \frac{p_n}{h^p\tau_n} \Big).
\end{equation}
Combining $\overline{\xi\what{\varepsilon}}=\overline{\xi\varepsilon}+\overline{\xi\varepsilon(\what{t}-1)}+\overline{\xi(m-\what{m})\what{t}}$ with arguments used in the proof of $(iii)$ gives
$$ \| \overline{\xi\what{\varepsilon}} \|_\infty=O_P\left( \frac{d_n}{\tau_n}+\frac{p_n}{h^p} \right). $$
By the previous display, $\overline{Y^*}=\overline{\what{m}}+\overline{\xi\what{\varepsilon}}$, and (\ref{eq: mhatoverline}), we get
\begin{equation}\label{eq: Ystaroverline - mf}
\| \overline{Y^*}-mf \|_\infty=O_P\Big( \frac{d_n}{\tau_n} + \frac{p_n}{h^p} \Big).
\end{equation}
Finally, plugging $\|\what{f}-f\|_\infty=O_P(d_n)$,  (\ref{eq: mhatstaroverline - mf}), and (\ref{eq: Ystaroverline - mf}) into (\ref{eq: Bhatstar decomposition}) and then using Assumptions \ref{Assumption: 3} and \ref{Assumption: 3bis}(i) gives the desired result.
\end{proof}

\begin{lem}
\label{Lemma : Uniform convergence of kernel sums} Let $\{U_i\}_{i=1}^n$ be a sequence of i.i.d. random variables taking values in $\mathbb{R}^q$.
Assume that
$\{\varphi_{n,s}\text{ : }s\in\mathcal{S}\}$ is
a sequence of classes of real-valued functions defined on the support of $U_1$
such that for any $n\in\mathbb{N}$ : $\sup_{s\in\mathcal{S}}\|\varphi_{n,s}\|_{\infty}<L_{\varphi}$
and $ \|\varphi_{n,s_{1}}-\varphi_{n,s_{2}}\|_{\infty}\leq L_{\varphi}\|s_{1}-s_{2}\|$
for all $s_{1},s_{2}\in\mathcal{S}$. Then, for any compact set $\mathcal{A}\subset\mathbb{R}^{p}$
\[
 \sup_{(w,\theta)\in\mathcal{A}\times \mathcal{S}} \Big| h^{-p} (\mathbb{P}_{n}-P) \varphi_{n,s}(U) K\Big(\frac{W-w}{h}\Big) \Big|=O_{P}\Big(\sqrt{\frac{\log n}{nh^{p}}}\Big)\, .
\]
\end{lem}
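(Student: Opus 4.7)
The plan is to interpret the quantity as the supremum of a centered empirical process indexed by $(w,s)\in\mathcal{A}\times\mathcal{S}$ over the function class
\[
\mathcal{F}_{n}=\bigl\{u\mapsto h^{-p}\varphi_{n,s}(u)\,K\bigl((w_{U}-w)/h\bigr):(w,s)\in\mathcal{A}\times\mathcal{S}\bigr\},
\]
where $w_{U}$ denotes the subvector of $u$ corresponding to the $W$-coordinate. The rate $\sqrt{\log n/(nh^{p})}$ is exactly what a Talagrand-type maximal inequality delivers once one (i) controls the envelope and the second moment of $\mathcal{F}_{n}$, and (ii) verifies that $\mathcal{F}_{n}$ is VC-type with constants independent of $n$ and $h$.

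For (i), the envelope is $F_{n}=L_{\varphi}\|K\|_{\infty}h^{-p}$, of order $h^{-p}$. Using a change of variable and the boundedness of $\varphi_{n,s}$ and of the density of $W$, one obtains $\sup_{(w,s)}P\bigl(h^{-p}\varphi_{n,s}(U)K((W-w)/h)\bigr)^{2}\leq C h^{-p}=:\sigma_{n}^{2}$. For (ii), I would decompose $\mathcal{F}_{n}$ as a product of two classes. The translation-dilation class $\{u\mapsto K((w_{U}-w)/h):w\in\mathcal{A}\}$ is a pointwise-separable VC-type class with constants independent of $h$, since $K$ has bounded support and bounded variation (it is smooth by Assumption \ref{Assumption: 1}); this follows from the standard Nolan--Pollard construction. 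The class $\{\varphi_{n,s}:s\in\mathcal{S}\}$ has $L^{\infty}$-covering number $O((1/\epsilon)^{\dim\mathcal{S}})$, uniformly in $n$, by the uniform boundedness and Lipschitz assumptions. Taking products and rescaling by the envelope yields a uniform covering number bound of the form $(A/\epsilon)^{v}$ with $A,v$ independent of $n$ and $h$.

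With these ingredients, Talagrand's inequality (see, e.g., Gin\'e and Guillou, or the Einmahl--Mason version tailored to kernel-type classes) gives
\[
\E\sup_{f\in\mathcal{F}_{n}}\bigl|(\mathbb{P}_{n}-P)f\bigr|\leq C\left(\sqrt{\frac{\sigma_{n}^{2}\log n}{n}}+\frac{\|F_{n}\|_{\infty}\log n}{n}\right)=C\left(\sqrt{\frac{\log n}{nh^{p}}}+\frac{\log n}{nh^{p}}\right),
\]
and the second term is negligible relative to the first whenever $nh^{p}/\log n\to\infty$, which is implicit in Assumption \ref{Assumption: 3}. Markov's inequality then converts the expected-supremum bound into the in-probability statement. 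Exponential concentration around this expectation, again by Talagrand, is available if a stronger almost-sure version is ever needed.

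The main obstacle is the verification of the VC-type entropy bound \emph{uniformly in $n$}, because $\varphi_{n,s}$ is an $n$-dependent class; the uniform-in-$n$ Lipschitz constant in $s$ together with the compactness of $\mathcal{S}$ is precisely what makes a polynomial covering number available with constants free of $n$. The rest amounts to standard bookkeeping: pointwise measurability of $\mathcal{F}_{n}$ follows from continuity in $(w,s)$ over the separable set $\mathcal{A}\times\mathcal{S}$, and the second-moment computation uses only $\|f_{W}\|_{\infty}<\infty$ and the boundedness of the support of $K$.
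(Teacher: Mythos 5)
Your proof is correct, but it takes a genuinely different route from the one the paper invokes. The paper's proof is a one-liner citing Theorem 1.4 of Li and Racine (2006), whose argument is the classical one: discretize the compact index set $\mathcal{A}\times\mathcal{S}$ into a grid whose mesh shrinks polynomially in $n$, apply Bernstein's inequality pointwise on the grid, take a union bound, and control the approximation error off the grid using the Lipschitz continuity of $K$ and of $s\mapsto\varphi_{n,s}$. The ``minor modification'' is just the extra supremum over $s$, which the uniform Lipschitz bound handles the same way as the supremum over $w$. You instead route the argument through modern empirical process theory: cast the supremum as an empirical process over a VC-type class (via Nolan--Pollard for the translation-dilation class and a bracketing bound for $\{\varphi_{n,s}\}$), and invoke a Talagrand-type maximal inequality in the Gin\'e--Guillou/Einmahl--Mason form.

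Both approaches deliver the same $\sqrt{\log n/(nh^p)}$ rate. The discretize-plus-Bernstein route is elementary and self-contained; the Talagrand route is shorter once the abstract machinery is in place, scales more gracefully to richer index sets, and makes the role of the variance proxy $\sigma_n^2$ and the envelope transparent. One bookkeeping point worth tightening in your write-up: the Gin\'e--Guillou/Einmahl--Mason inequality is stated for a class normalized so that the envelope is $O(1)$; you should normalize $\mathcal{F}_n$ by $h^{-p}\|K\|_\infty L_\varphi$ so that the variance proxy becomes $\sigma_n^2 = Ch^p$ (not $Ch^{-p}$), apply the inequality, and then undo the normalization. The algebra gives exactly the same final bound you wrote, so the conclusion stands, but as written the inequality is applied to an unnormalized class with $\sigma_n$ exceeding the unit scale, which is not quite the form in which it is proved. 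You also implicitly use $\log(1/h)\lesssim\log n$, which holds under the bandwidth rates of Assumption~\ref{Assumption: 3}, so no issue there.
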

\begin{proof} The result is a minor modification of the proof of Theorem 1.4 in \citet{li_nonparametric_2006}.
\end{proof}



The following lemma provides the regularity features needed to apply stochastic equicontinuity results. Similar results can be found in  \citet{AndrewsHB} and \citet{andrews_nonparametric_1995}. The differences with respect to these works are the presence of the bias correction components $\overline{\what{m}}/\what{f}$ and the  different assumptions on the bandwidths. Recall that
$$\mathcal{W}_n:=\left\{ w\text{ : }f_W(w)\geq \tau_n/2\right\}\text{ . }$$

\begin{lem}
\label{lem:Belonging Conditions} Under Assumptions \ref{Assumption: 1}-\ref{Assumption: 3bis} for $l =\lceil (p+1)/2 \rceil$,
\begin{enumerate}[label=(\roman*)]
    \item $\Pr \Big( \what{m}\in \mathcal{G}_l(\mathcal{W}_{n})\Big)\rightarrow1$,
    \item $\Pr \Big( \overline{\what{m}}/\what{f}\in \mathcal{G}_l(\mathcal{W}_{n})\Big)\rightarrow1$,
    \item $\Pr \Big( \what{\iota}_s\in \mathcal{G}_l(\mathcal{W}_{n})\text{ for all }s\in\mathcal{S}\Big)\rightarrow1$,
    \item $\Pr \Big( \overline{\xi \what{\varepsilon}}/\what{f}\in \mathcal{G}_l(\mathcal{W}_{n})\Big)\rightarrow1$,
    \item $\Pr \Big( \overline{\what{m}^*}/\what{f}\in \mathcal{G}_l(\mathcal{W}_{n})\Big)\rightarrow1$.
\end{enumerate}
\end{lem}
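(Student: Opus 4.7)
The plan is to treat all five quantities through a common recipe: each is either a kernel-smoother divided by $\widehat{f}$, or itself $\widehat{m}=\overline{Y}/\widehat{f}$. Since on $\mathcal{W}_n$ we have $f\geq \tau_n/2$ and, by the uniform rate $\|\widehat{f}-f\|_\infty=O_P(d_n)$ combined with $d_n\tau_n^{-1}=o(1)$, the denominator $\widehat{f}$ is bounded below by $\tau_n/4$ on $\mathcal{W}_n$ with probability approaching one. Thus, on the event that $\widehat{f}\geq\tau_n/4$ uniformly over $\mathcal{W}_n$, it suffices to show that for every multi-index $l$ with $|l|\leq \lceil(p+1)/2\rceil$, the partial derivatives $\partial^l$ of the numerator and of $\widehat{f}$ are uniformly bounded on $\mathcal{W}_n$ by a deterministic constant (up to $o_P(1)$ error).

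First I would prove the key building block. For a kernel smoother of the form $\overline{\psi}(w)=(nh^p)^{-1}\sum_i \psi(U_i)K_h(W_i-w)$ with $\psi$ bounded, differentiating in $w$ produces $\partial^l\overline{\psi}(w)=(-1)^{|l|}h^{-|l|}(nh^p)^{-1}\sum_i \psi(U_i)(\partial^l K)((W_i-w)/h)$. Since $K\in\mathcal{K}^r_\lambda$ with $\lambda\geq l$, $\partial^l K$ is bounded and has bounded support, so by Lemma~\ref{Lemma : Uniform convergence of kernel sums} applied to $\partial^l K$, the stochastic part is $O_P(h^{-|l|}\sqrt{(\log n)/(nh^p)})$ while a standard Taylor bias computation gives $\partial^l \E\overline{\psi}=\partial^l(\psi^{\star}f)+O(h^r)$ where $\psi^\star(w)=\E[\psi(U)|W=w]$, uniformly on $\mathcal{W}$. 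Under Assumption \ref{Assumption: 1} the population derivative is uniformly bounded, so altogether $\|\partial^l\overline{\psi}\|_\infty=O_P(1)+O_P(d_n h^{-|l|})$.

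Next I would plug this into the quotient rule for $\partial^l(\overline{\psi}/\widehat{f})$ on $\mathcal{W}_n$: by Faà di Bruno (or induction on $|l|$), $\partial^l(\overline{\psi}/\widehat{f})$ is a finite sum of products of $\partial^{l'}\overline{\psi}$ and $\partial^{l''}\widehat{f}$ with $|l'|,|l''|\leq |l|$, divided by powers $\widehat{f}^{k}$ with $k\leq |l|+1$. On the high-probability event $\widehat{f}\geq\tau_n/4$ on $\mathcal{W}_n$, each such term is $O_P\bigl((1+d_n h^{-|l|})\tau_n^{-(|l|+1)}\bigr)=O_P(\tau_n^{-(|l|+1)})+O_P(d_n h^{-|l|}\tau_n^{-(|l|+1)})$. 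Assumption \ref{Assumption: 3}(iii) gives $d_n h^{-|l|}\tau_n^{-(2+|l|)}=o(1)$, so the second piece is $o_P(\tau_n^{-1})=o_P(1)$ once we remember that the whole expression needs only to be \emph{bounded}, not small — but the bound $\tau_n^{-(|l|+1)}$ still diverges, so here we must be more careful: the leading population part $\partial^l(\psi^\star f/f)=\partial^l\psi^\star$ is bounded by smoothness of the target (e.g., $m$ for (i), $\iota_s$ uniformly in $s$ for (iii)), and only the remainder carries $\tau_n^{-(|l|+1)}$ factors, which is controlled by Assumptions \ref{Assumption: 3}(iii) and \ref{Assumption: 3bis}(i). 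This handles (i) with $\psi(U)=Y$, (iii) with $\psi(U)=\phi_s(X)$ uniformly in $s$ (the uniformity follows from the Lipschitz and bounded envelope of $\{\phi_s\}$ together with Lemma \ref{Lemma : Uniform convergence of kernel sums}), and (iv) with $\psi(U)=\xi\widehat{\varepsilon}$, where one exploits that $\xi$ is independent of the sample with finite second moment and $\widehat{\varepsilon}$ is uniformly bounded (after trimming).

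The parts (ii) and (v) involve a second layer of smoothing through $\overline{\widehat{m}}$ and $\overline{\widehat{m}^*}$. Here the hard part is that the ``$\psi$'' inside the smoother is itself random and stochastic derivatives with respect to the outer $w$ only hit the outer kernel, so $\partial^l\overline{\widehat{m}}(w)=(-1)^{|l|}h^{-|l|}(nh^p)^{-1}\sum_i \widehat{m}(W_i)\widehat{t}_i(\partial^l K)((W_i-w)/h)$, which by the boundedness of $\widehat{m}\widehat{t}=(\widehat{m}-m)\widehat{t}+m\widehat{t}$ (the first piece being $o_P(n^{-1/4})$ by Lemma~\ref{lem:littleo}(ii) and the second being uniformly bounded) admits the same uniform rate as the basic building block. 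The quotient rule argument then gives (ii) and (v) in the same way, once one notes that for (v), $\overline{\widehat{m}^*}=\overline{m}+\overline{m(\widehat{t}-1)}+\overline{(\widehat{m}^*-m)\widehat{t}}$ and the two perturbation terms contribute only $O_P(p_n h^{-(p+|l|)}\tau_n^{-(1+|l|)})+O_P(d_n h^{-|l|}\tau_n^{-(|l|+2)})$ to each partial derivative, which is $o_P(1)$ by Assumptions \ref{Assumption: 3}(iii) and \ref{Assumption: 3bis}(i). The main obstacle is bookkeeping the powers of $\tau_n$ that accumulate from each differentiation of the ratio; the bandwidth/trimming conditions in Assumptions \ref{Assumption: 3}(iii) and \ref{Assumption: 3bis}(i) are designed exactly to absorb these powers for $|l|$ up to $\lceil(p+1)/2\rceil$, which is why these conditions appear in the hypotheses.
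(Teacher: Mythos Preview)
The paper defers the proof to its Supplementary Material, so a line-by-line comparison is not possible, but your outline follows the standard Andrews-type argument the paper explicitly cites as its template: center at the smooth population limit, apply the product/quotient rule, control derivatives of kernel smoothers via Lemma~\ref{Lemma : Uniform convergence of kernel sums} applied to $\partial^l K$, and absorb the accumulated $h^{-|l|}$ and $\tau_n^{-k}$ factors with Assumptions~\ref{Assumption: 3}(iii) and~\ref{Assumption: 3bis}(i). Your identification of the role of each assumption is correct, and the decomposition $\overline{\widehat m^*}=\overline m+\overline{m(\widehat t-1)}+\overline{(\widehat m^*-m)\widehat t}$ for (v) is exactly the one the paper uses elsewhere (cf.\ Lemma~\ref{lem:littleo}).

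Two places where your sketch should be tightened. First, the passage where you bound each quotient-rule term by $O_P(\tau_n^{-(|l|+1)})$ and then ``be more careful'' is better organized as follows: write each estimator as its population limit plus $N/\widehat f$ with a \emph{small} numerator $N$ (e.g.\ $N=\overline Y-m\widehat f$ for (i), $N=\overline{\widehat m}-m\widehat f$ for (ii)), so that $\partial^{l_0}N$ itself already carries the small factor $O_P(d_n h^{-|l_0|})$ (or $O_P(d_n h^{-|l_0|}\tau_n^{-1}+p_n h^{-p-|l_0|})$ for the double-smoothed cases). Then every term of $\partial^l(N/\widehat f)$ is small without relying on cancellations among diverging population terms, and the worst case is attained at $|l_0|=|l|$, $k=0$ since $h\tau_n^{-1}=o(1)$. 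Second, for (iv) the weight $\xi$ is only assumed to have a finite third moment, not to be bounded, so Lemma~\ref{Lemma : Uniform convergence of kernel sums} as stated does not apply verbatim to $\overline{\xi\varepsilon}$; you need either a standard Bernstein-type extension of that lemma under moment conditions, or to condition on the data and use the independence of $\xi$. This is minor but should be flagged.
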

\begin{proof}
The proof of this result can be found in the Supplementary Material.
\end{proof}

The following lemma provides a stochastic equicontinuity result. Similar results can be found in \citet{AndrewsHB} or \citet{andrews_nonparametric_1995}. Let us introduce some notation that will be used in the proof. For a generic space of functions $\mathcal{F}$ endowed with the $L_2(P)$ metric $||\cdot||_{2,P}$, we denote by $N(\epsilon,\mathcal{F},||\cdot||_{2,P})$ the $\epsilon$ covering number and with $N_{[\cdot]}(\epsilon,\mathcal{F},||\cdot||_{2,P})$ the $\epsilon$ bracketing number, see \citet{vaart_asymptotic_1998} and \citet{van_der_vaart_weak_1996}.

\begin{lem}
\label{lem: ASE}Let Assumptions \ref{Ass0}-\ref{Assumption: 3bis} hold and denote with $Supp(\zeta)$ the support of $\zeta:=(\xi,Y,W,X)$.
\begin{enumerate}[label=(\roman*)]
    \item If $\{\widehat{f}_{s}:s\in\mathcal{S}\}$ is a collection of stochastic real-valued functions defined on $\mathcal{W}$ such that $ \sup_{s\in\mathcal{S}}\| \widehat{f}_{s}t\|_\infty=o_P(1)$ and $\Pr(\widehat{f}_{s}\in \mathcal{G}_{l}(\mathcal{W}_{n}) \text{ for all }s\in\mathcal{S})\rightarrow1$ for $l =\lceil (p+1)/2 \rceil$, then
\[
 \sup_{s\in\mathcal{S}}\Big| \mathbb{G}_{n} t \widehat{f}_{s} \phi_{s}\Big|=o_{P}(1)\, .
\]
The same result also holds when $\phi_s$ is replaced by $\phi_s-\iota_s$, $\xi(\phi_s-\iota_s)$, or a fixed bounded function.

  \item  If $g_{n}:Supp(\zeta)\mapsto\mathbb{R}$ is a sequence of functions such that $\|g_n\|_{\infty}=O(1)$,
\[
 \mathbb{G}_{n} g_{n} \int K(u) \iota_s(\cdot+uh)\, du=\mathbb{G}_{n} g_{n} \iota_s+o_{P}^{\mathcal{S}}(1)\, .
\]
\end{enumerate}

\end{lem}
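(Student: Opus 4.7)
The plan is to treat both parts as stochastic equicontinuity statements for empirical processes indexed by smooth classes, and to exploit a shrinking-envelope argument to conclude $o_P(1)$.

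For Part (i), I would first restrict to the high-probability event $\mathcal{E}_n = \{\widehat{f}_s \in \mathcal{G}_l(\mathcal{W}_n) \text{ for all } s\} \cap \{\sup_s \|\widehat{f}_s t\|_\infty \leq \eta_n\}$ for some deterministic $\eta_n \to 0$ chosen so that $\Pr(\mathcal{E}_n) \to 1$, which is possible by the two hypotheses. On $\mathcal{E}_n$, the random function $t\widehat{f}_s\phi_s$ belongs to
\[
\mathcal{F}_n(\eta_n) := \bigl\{(y,w,x) \mapsto t(w)f(w)\phi_s(x) : f \in \mathcal{G}_l(\mathcal{W}_n),\; s \in \mathcal{S},\; \|ft\|_\infty \leq \eta_n\bigr\}.
\]
I would then invoke the classical bracketing-entropy bound for smooth classes on bounded convex sets (van der Vaart–Wellner, Theorem 2.7.1), giving $\log N_{[]}(\varepsilon, \mathcal{G}_l(\mathcal{W}_n), L_2(P)) \lesssim \varepsilon^{-p/l}$ with a constant independent of $n$; here the convexity of $\mathcal{W}_n$ for large $n$ (Assumption \ref{Assumption: 3bis}(ii)) and its containment in the bounded set $\mathcal{W}$ are what make the bound uniform in $n$. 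Because $l \geq \lceil(p+1)/2\rceil$ forces $p/l < 2$, the bracketing integral converges and $\mathcal{G}_l(\mathcal{W}_n)$ is uniformly Donsker. Combining this with the family $\{\phi_s : s \in \mathcal{S}\}$, which is Lipschitz in $s$ on the compact $\mathcal{S}$ (and bounded by the analyticity of $\varphi$ together with boundedness of $X$), the product class $\mathcal{F}_n(1)$ is Donsker with a square-integrable envelope. A standard stochastic-equicontinuity result (vdV–Wellner, Theorem 2.11.23) then gives $\sup_{g \in \mathcal{F}_n(\eta_n)} |\mathbb{G}_n g| = o_P(1)$, since the envelope $O(\eta_n)$ tends to zero in $L_2(P)$. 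The same argument carries through with $\phi_s$ replaced by $\phi_s - \iota_s$ or $\xi(\phi_s - \iota_s)$: the class $\{\iota_s\}$ lies in $\mathcal{G}_r(\mathcal{W})$ uniformly in $s$ by Assumption \ref{Assumption: 1}(ii) and hence sits inside a Donsker class, and the bootstrap weight $\xi$ enters as an independent factor with $\E\xi^2 = 1$, not affecting entropy or the envelope argument.

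For Part (ii), set $\Delta_s(w) := \int K(u)\iota_s(w+uh)\,du - \iota_s(w)$. A Taylor expansion of $\iota_s$ to order $r$, combined with the $r$-th order property of $K \in \mathcal{K}_\lambda^r$ (so $\int u^l K(u)\,du = 0$ for $1 \leq |l| < r$) and the uniform bound on $r$-th derivatives of $\iota_s$ from Assumption \ref{Assumption: 1}(ii), yields $\sup_{s\in\mathcal{S}}\|\Delta_s\|_\infty = O(h^r) = o(1)$. Consequently the class $\{g_n \cdot \Delta_s : s \in \mathcal{S}\}$ has sup-norm envelope $O(h^r)$ tending to zero. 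Because $\{\iota_s\}$ is Donsker (it sits in $\mathcal{G}_r(\mathcal{W})$ uniformly in $s$, just as in Part (i)), so is the shifted family $\{\int K(u)\iota_s(\cdot+uh)\,du\}$, and multiplication by the bounded $g_n$ preserves the Donsker property. Applying the same equicontinuity argument as in Part (i) then gives $\sup_s |\mathbb{G}_n g_n \Delta_s| = o_P(1)$, which is exactly $\mathbb{G}_n g_n \int K(u)\iota_s(\cdot+uh)\,du - \mathbb{G}_n g_n \iota_s = o_P^{\mathcal{S}}(1)$.

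The main obstacle is establishing the uniform-in-$n$ bracketing entropy bound for $\mathcal{G}_l(\mathcal{W}_n)$, since the underlying domain $\mathcal{W}_n$ itself drifts with $n$; this is where Assumption \ref{Assumption: 3bis}(ii) (eventual convexity of $\mathcal{W}_n$) does the real work, allowing invocation of the classical smooth-function entropy bound with constants not depending on $n$. Once this is in hand, the remaining machinery is fairly routine: in Part (i) the factor $\sup_s\|\widehat{f}_s t\|_\infty = o_P(1)$ supplies the shrinking envelope, and in Part (ii) the $O(h^r)$ bias of the kernel convolution plays the analogous role.
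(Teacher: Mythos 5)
Your proposal is correct and follows essentially the same route as the paper: restrict to the high-probability event, embed $t\widehat{f}_s\phi_s$ into the product class $t\cdot\mathcal{G}_l(\mathcal{W}_n)\cdot\{\phi_s\}$, bound the entropy via van der Vaart--Wellner Theorem 2.7.1 (using the eventual convexity of $\mathcal{W}_n$ to make the constant uniform in $n$, with $l\geq\lceil(p+1)/2\rceil$ ensuring the exponent is below $2$), and exploit the shrinking $L_2(P)$-envelope through a local maximal inequality; the paper invokes van der Vaart's Lemma 19.34 where you cite Theorem 2.11.23, but these play an interchangeable role, and Part (ii) is handled in both cases by an $O(h^r)$ Taylor expansion followed by the Part-(i) machinery.
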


\begin{proof}
$(i)$ Fix an arbitrary $\epsilon>0$. The assumptions on $\{f_s:s\in\mathcal{S}\}$ ensure that  with probability approaching one
\begin{equation}\label{eq: Upperbound to introduce VdV bracketing maximal inequality}
  \sup_{s\in\mathcal{S}}\Big| \mathbb{G}_{n} t \what{f}_{s} \phi_{s}\Big|\leq \sup_{g\in \mathcal{G}_n^{\epsilon}} \left| \mathbb{G}_{n}g \right|
\end{equation}
with
$\mathcal{G}_{n}^{\epsilon}:=\{g\in\mathcal{G}_{n} : \|g\|_{2,P}<\epsilon\}$,
$\mathcal{G}_n = t\cdot \mathcal{G}_l(\mathcal{W}_n) \cdot \Psi$, and  $\Psi:=\left\{ \phi_s:s\in\mathcal{S}\right\}$.
By \citet[Lemma 19.34]{vaart_asymptotic_1998}, if
\begin{equation*}
    \log N_{[\cdot]}(\epsilon,\mathcal{G}_n,||\cdot||_{2,P})\leq C \epsilon^{-\gamma} \quad \text{ for all }\epsilon\in(0,1)\text{ and some fixed }\gamma\in(0,2)
    \, ,
\end{equation*}
then the expectation of the right-hand side of (\ref{eq: Upperbound to introduce VdV bracketing maximal inequality}) can be made arbitrarily small asymptotically by choosing $\epsilon$ small enough. So, Markov's inequality would deliver the desired result.

Since  $N_{[\cdot]}(2\epsilon,\mathcal{G}_n,||\cdot||_{2,P})\leq N(\epsilon,\mathcal{G}_n,||\cdot||_{\infty})$, we focus on the latter. Assumption \ref{Assumption: 3bis}(ii) and \citet[Theorem 2.7.1]{van_der_vaart_weak_1996} ensure that for each $n$ large enough
\[
  \log N\Big( \epsilon\, ,\mathcal{G}_{l}(\mathcal{W}_{n}) \, ,\|\cdot\|_{\infty,\mathcal{W}_{n}}\Big)\leq C\varepsilon^{-\upsilon},\quad \upsilon\in(0,2)\,
\]
where $||g||_{\infty,\mathcal{W}_n}:=\sup_{w\in\mathcal{W}_n}|g(w)|$ for any $g\in\mathcal{G}_l(\mathcal{W}_n)$.
Since $\phi_s$ is Lipschitz on the compact $\mathcal{S}$,
$N(\epsilon,\Psi,\|\cdot\|_\infty)\leq C\epsilon^{-q}$ by \citet[Theorem 9.15]{kosorok_introduction_2008}.
So, using the fact that $t$ is bounded,
\[
 N(\epsilon,\mathcal{G}_{n},\|\cdot\|_{\infty})\leq C \epsilon^{-q}\text{exp}(\epsilon^{-\upsilon})\text{ with \ensuremath{\upsilon\in(0,2)} }
 \, .
\]
The same reasoning applies if $\phi_s-\iota_s$ replaces $\phi_s$, and if $\xi(\phi_s-\iota_s)$ replaces $\phi_s$, since $N_{[\cdot]}( \epsilon ||\xi||_{2}, \xi \cdot \mathcal{G}_n, ||\cdot||_{2,P})\leq N_{[\cdot]} (\epsilon,\mathcal{G}_n,||\cdot||_{2,P}) $, where $||\xi||_2^2=\E|\xi|^2$.

$(ii)$ By a $r$th order Taylor expansion, $\int K(u) \iota_{s}(w+uh)\, du=\iota_{s}(w)+O(h^{r})$ uniformly in $w\in \mathcal{W}$.
Thus, the proof proceeds along similar lines as the proof of $(i)$.

\end{proof}

\footnotesize

\paragraph{Acknowledgements}
We thank the Editor Michael Jansson and two referees for their comments that helped to improve the paper.

\bigskip
\bibliographystyle{ecta}
\footnotesize
\renewcommand{\baselinestretch}{1}
\bibliography{biblio}
\normalsize

\newpage
\begin{figure}[ht]
\begin{center}
\includegraphics[width=7cm,height=7.5cm]{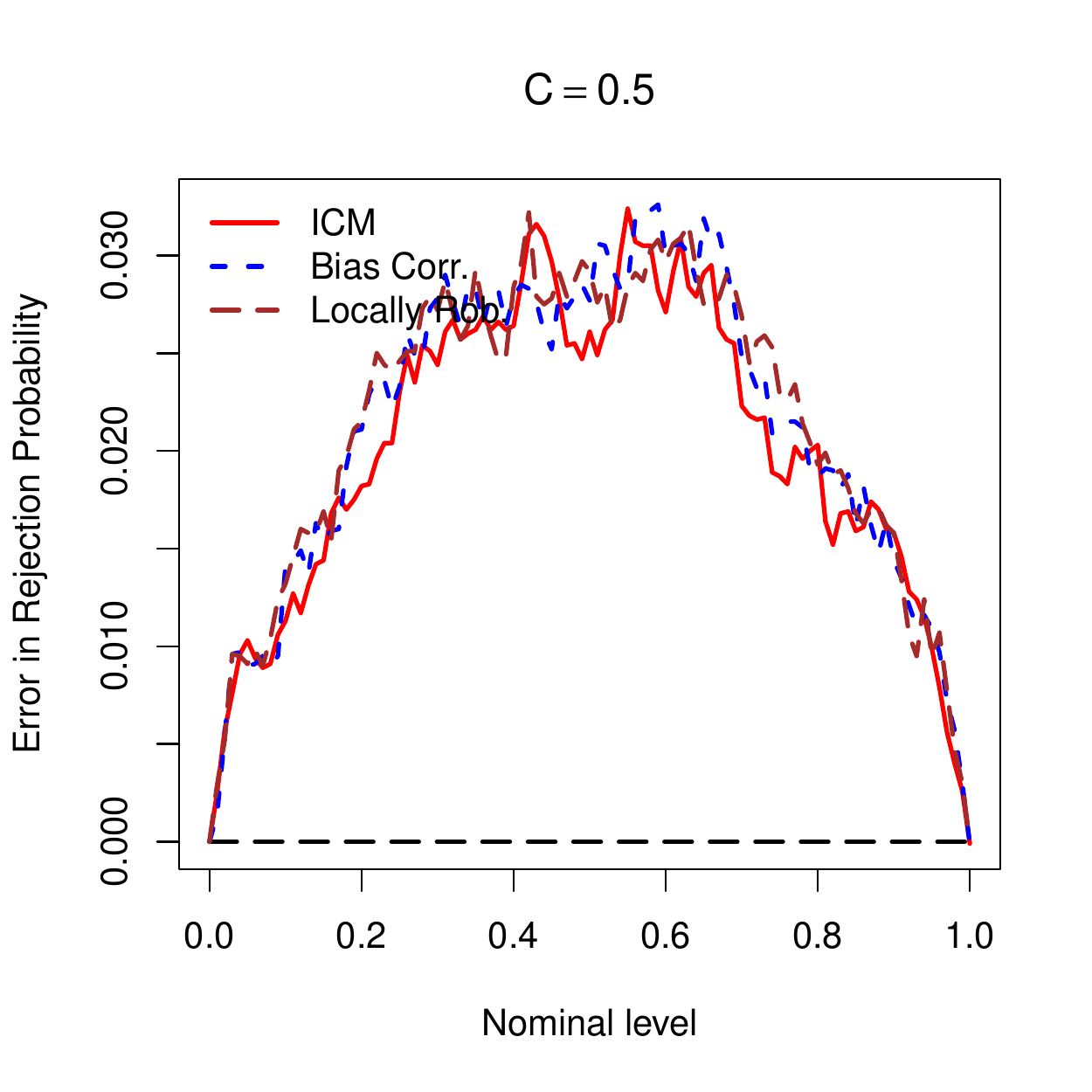}
\includegraphics[width=7cm,height=7.5cm]{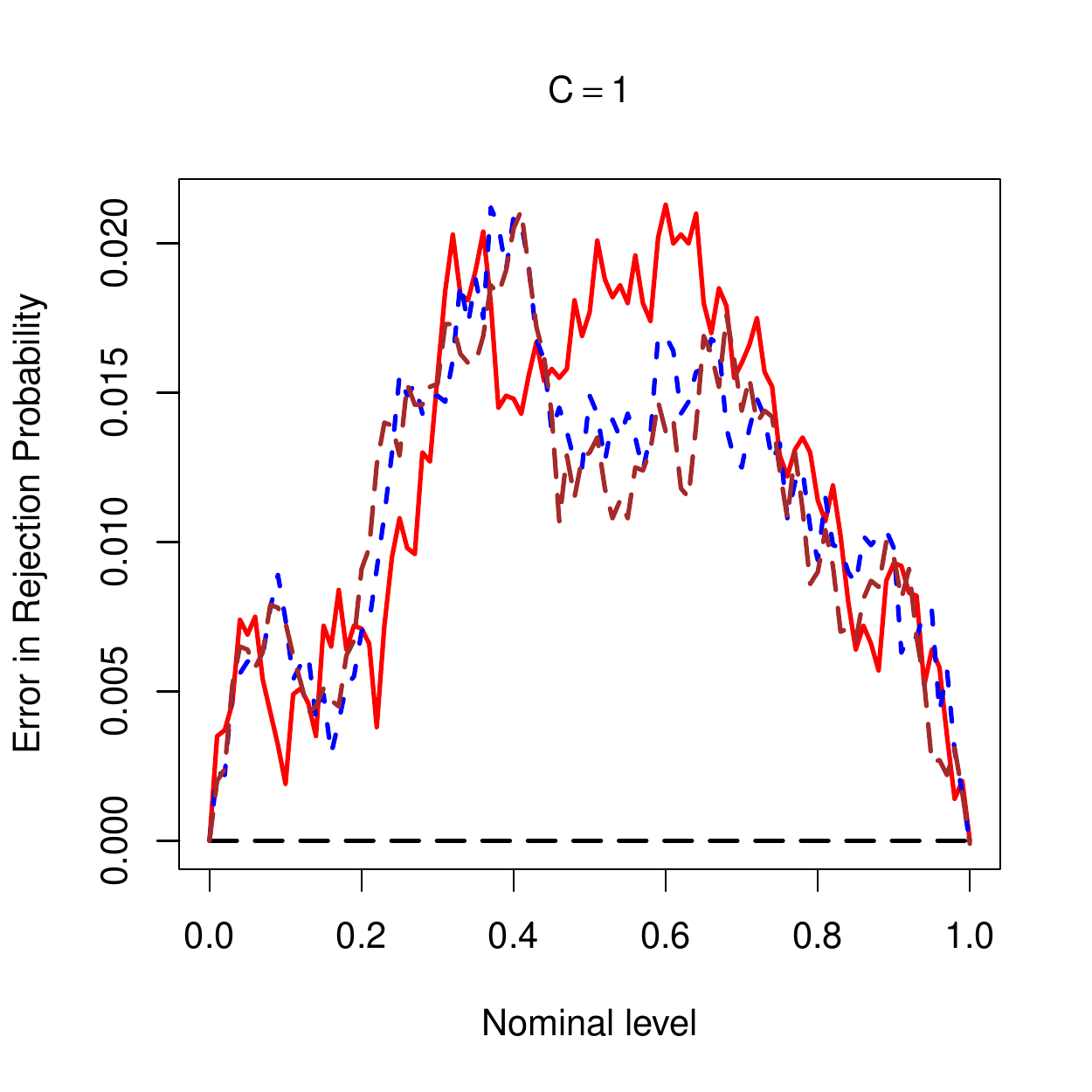}
\includegraphics[width=7cm,height=7.5cm]{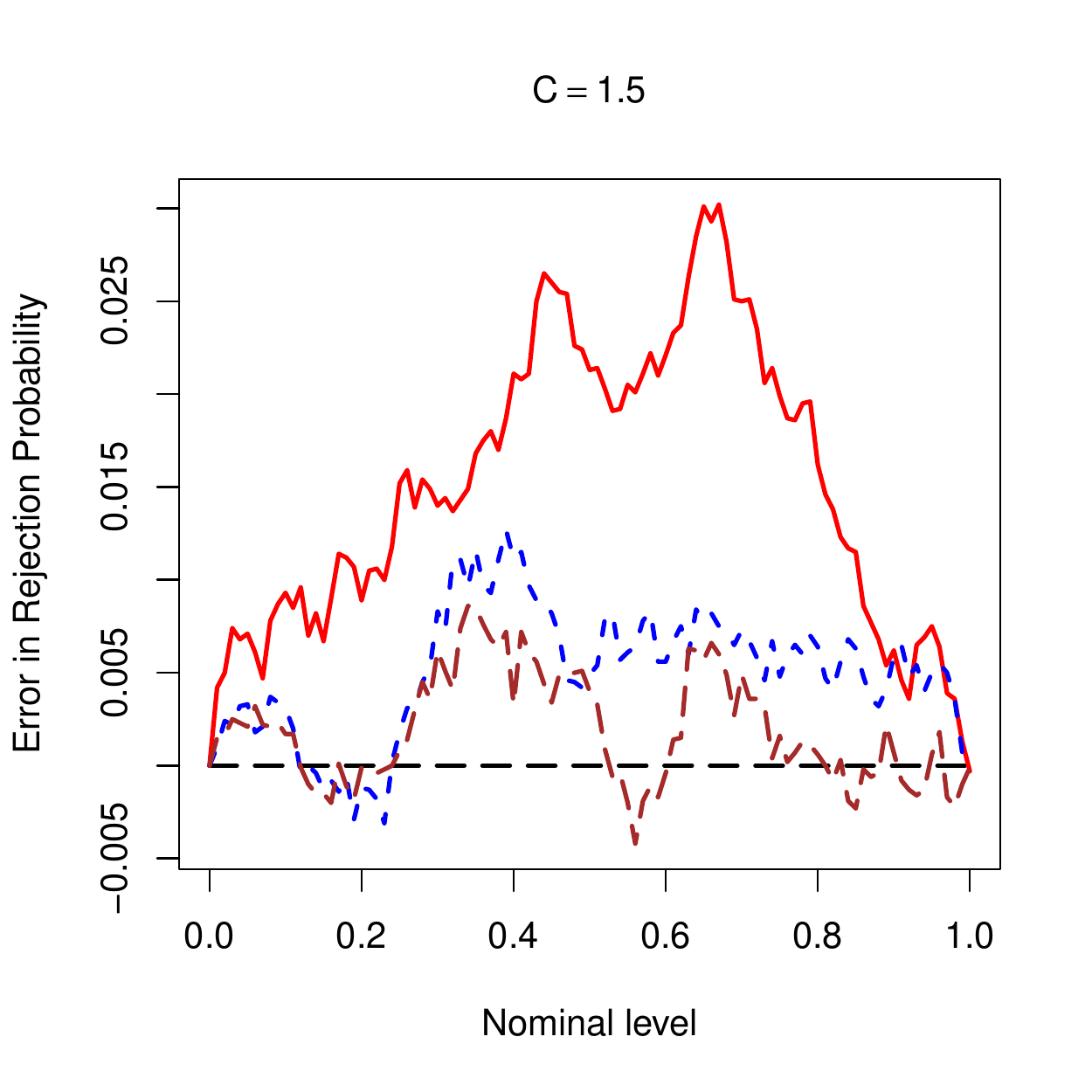}
\includegraphics[width=7cm,height=7.5cm]{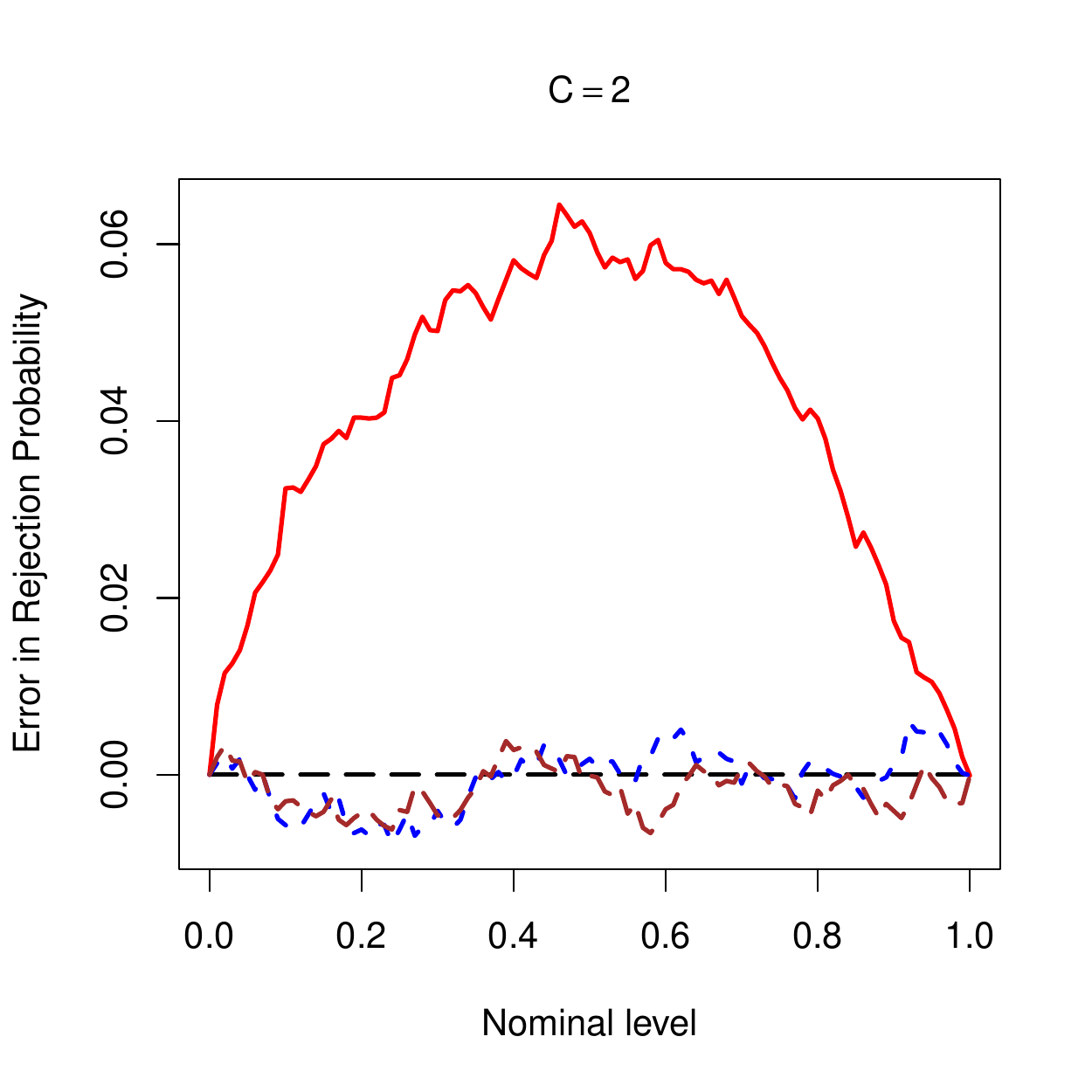}
\end{center}
\caption{\small Errors in rejection probabilities for $n=400$ using a
rule-of-thumb bandwidth $h=C\widehat{\sigma}_W n^{-1/5}$.}
\label{Fig o}
\end{figure}

\newpage
\begin{figure}[ht]
\begin{center}
\includegraphics[width=7cm,height=7.5cm]{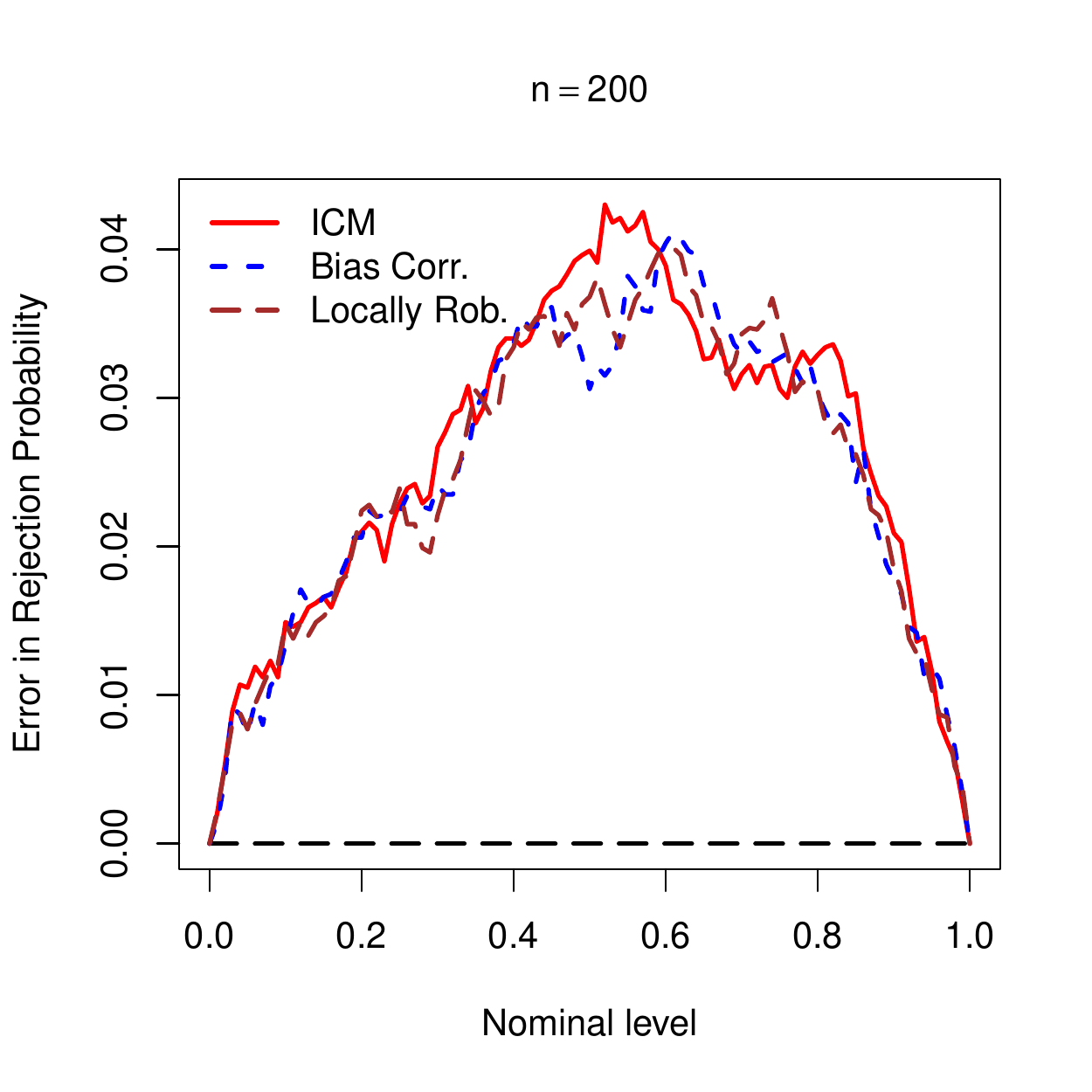}
\includegraphics[width=7cm,height=7.5cm]{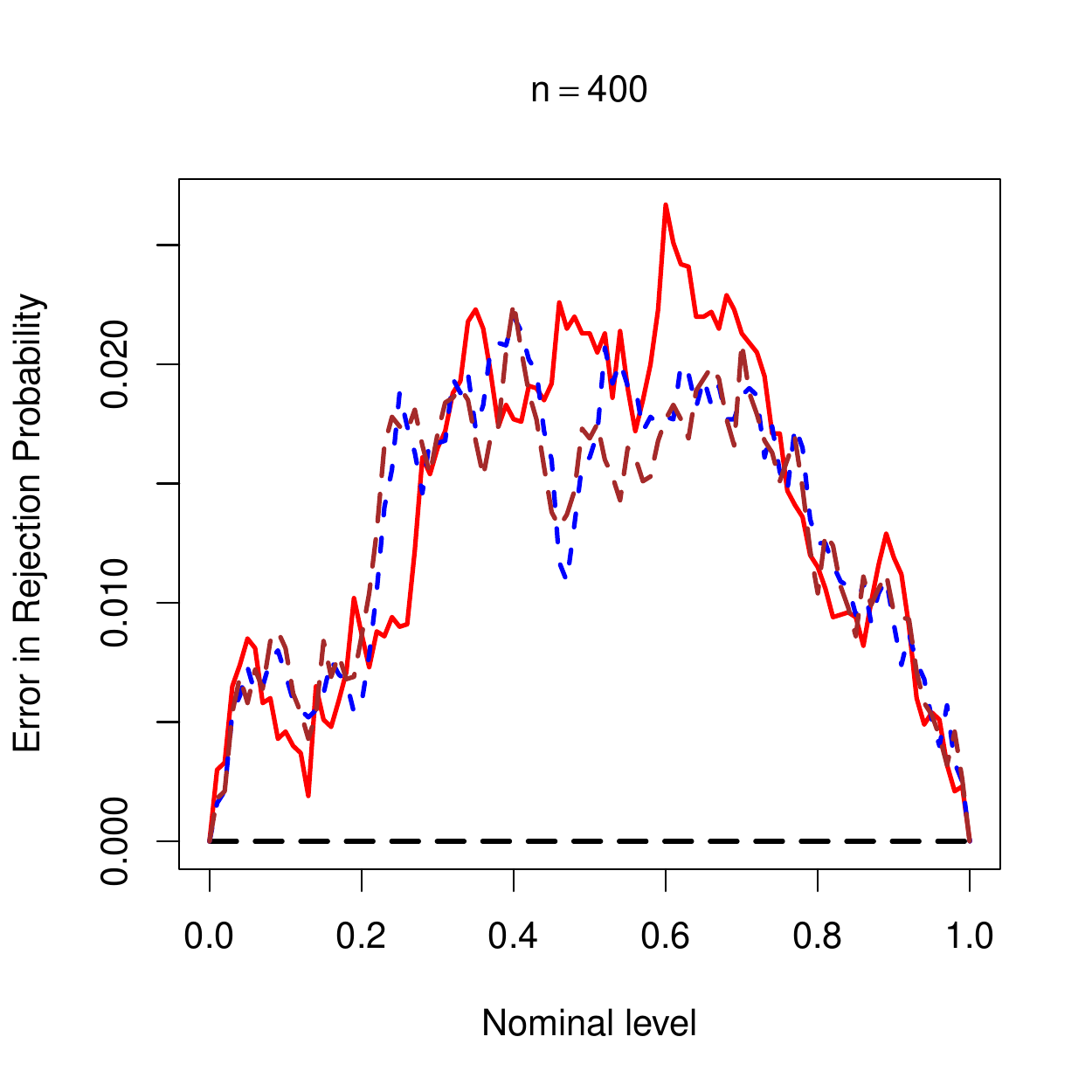}
\includegraphics[width=7cm,height=7.5cm]{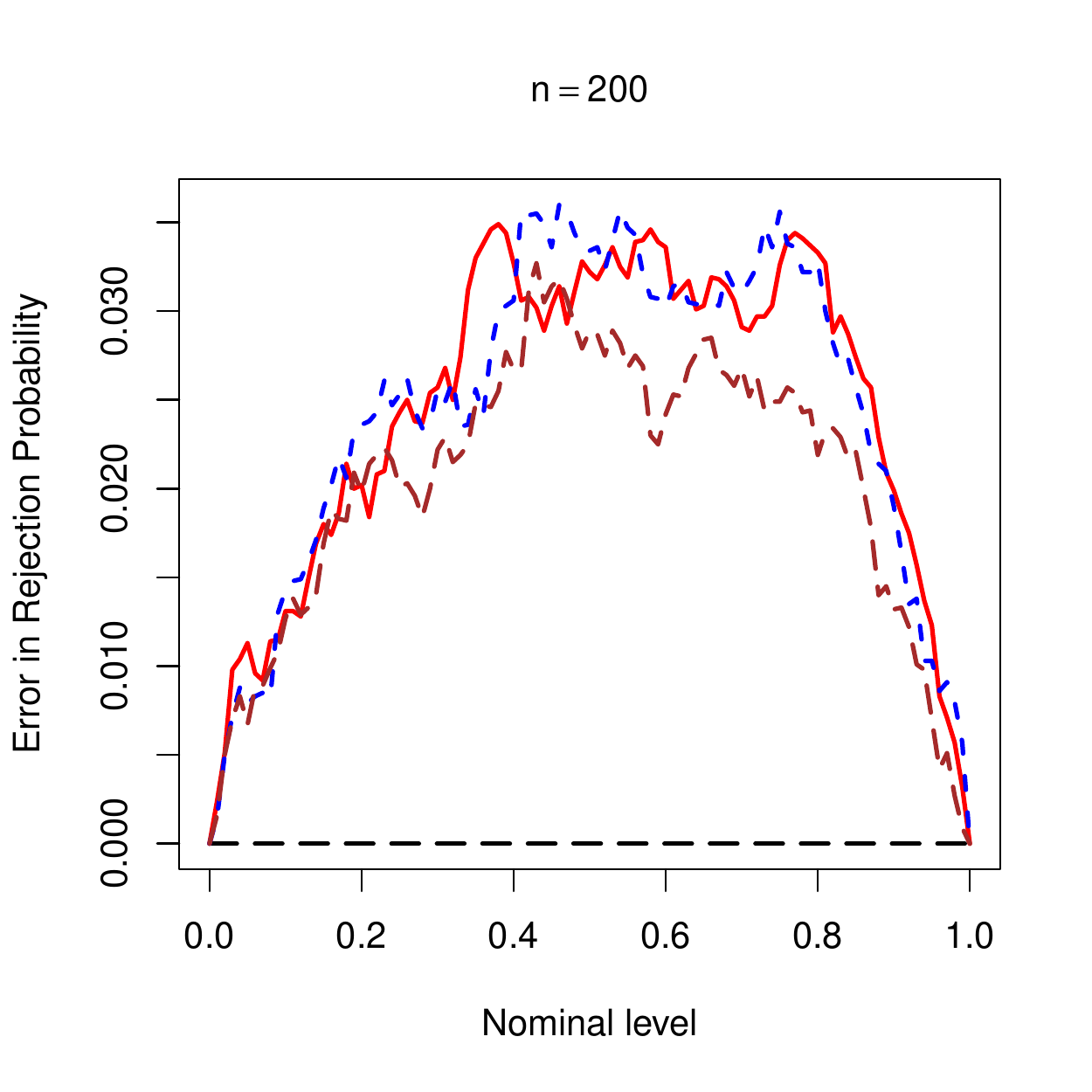}
\includegraphics[width=7cm,height=7.5cm]{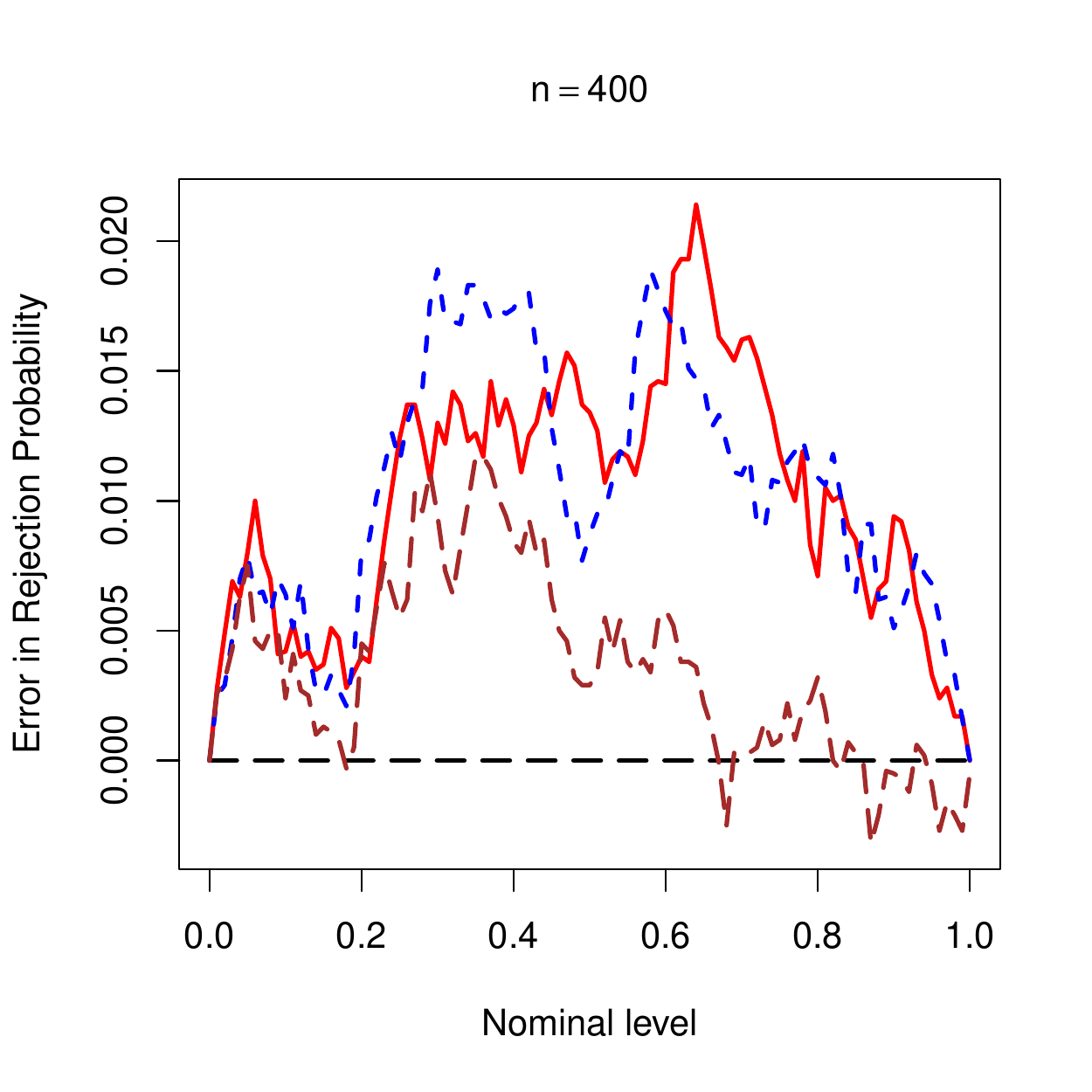}
\end{center}
\caption{\small Errors in rejection probabilities using a data-driven
bandwidth: results with trimming on first row, results with no
trimming on second row.}
\label{Fig i}
\end{figure}

\newpage
\begin{figure}[ht]
\begin{center}
\includegraphics[width=7cm,height=7.5cm]{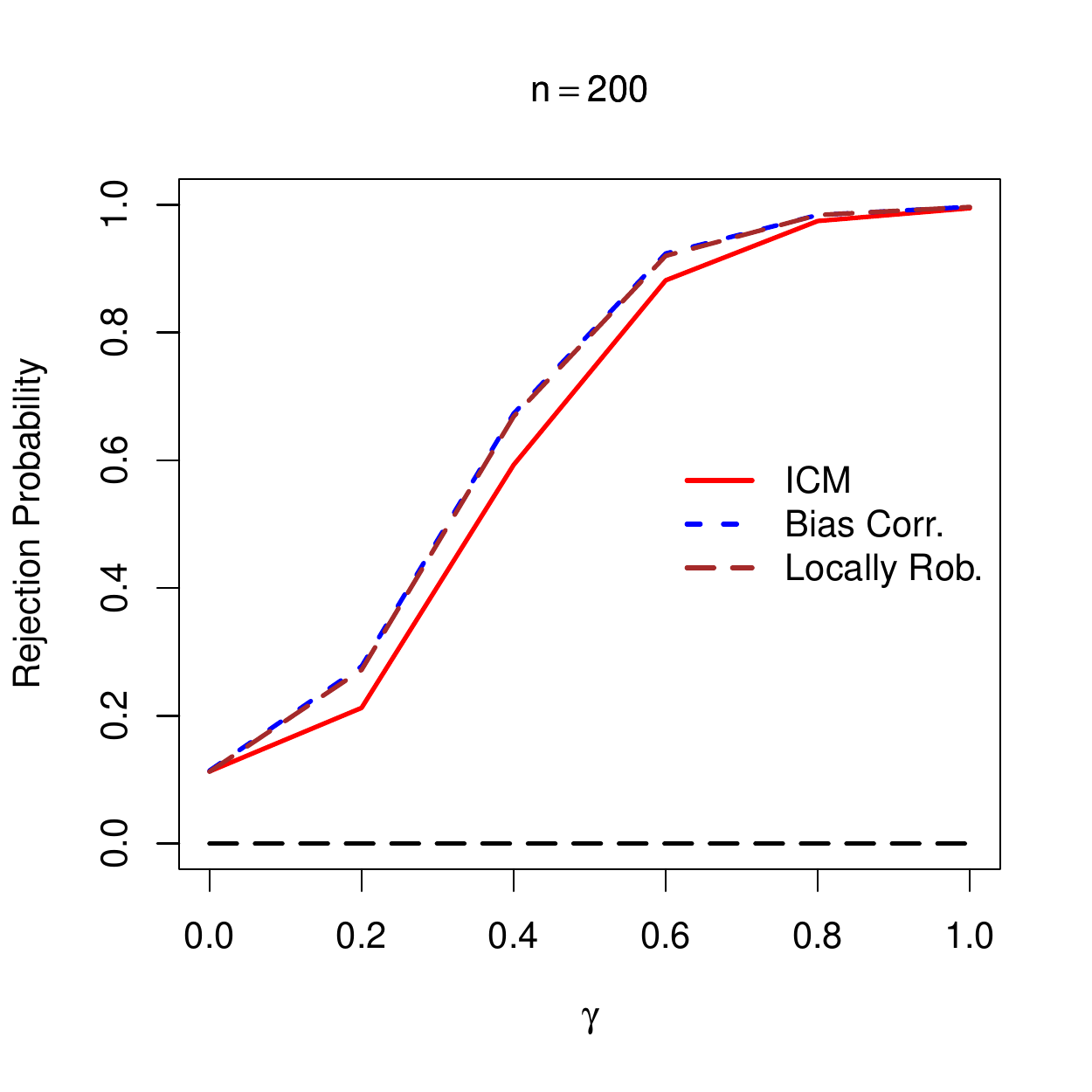}
\includegraphics[width=7cm,height=7.5cm]{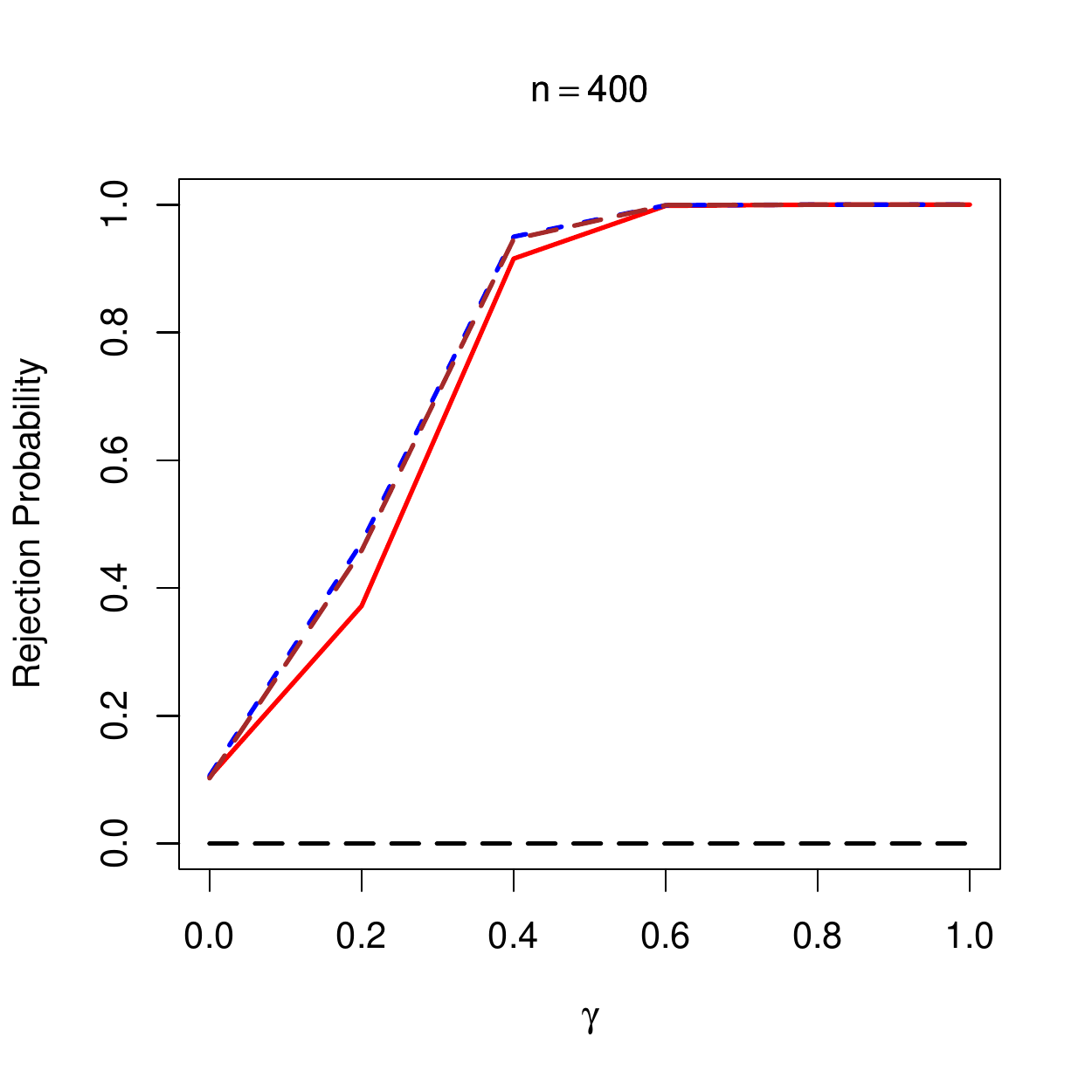}
\end{center}
\caption{\small Power curves for tests at 10\% level  using a data-driven bandwidth and no trimming.}
\label{Fig ii}
\end{figure}

\end{document}